\definecolor{BlueYonder}{RGB}{68, 112, 173}
\definecolor{IshihataBlue}{RGB}{0 135 190}
\newcommand{\discardpages}[1]{% \discardpages{<csv list>}
	\xdef\discard@pages{#1}% Store pages to discard
	\AtBeginShipout{% At shipout, decide whether to discard page/not
		\renewcommand*{\do}[1]{% How to handle each page entry in csv list
			\ifnum\value{page}=##1\relax%
			\AtBeginShipoutDiscard% Discard page/not
			\gdef\do####1{}% Do nothing further
			\fi%
		}%
		\expandafter\docsvlist\expandafter{\discard@pages}% Process list of pages to discard
	}%
}
\newif\ifkeeppage
\newcommand{\keeppages}[1]{% \keeppages{<csv list>}
	\xdef\keep@pages{#1}% Store pages to keep
	\AtBeginShipout{% At shipout, decide whether to discard page/not
		\keeppagefalse%
		\renewcommand*{\do}[1]{% How to handle each page entry in csv list
			\ifnum\value{page}=##1\relax%
			\keeppagetrue% Page should be kept
			\gdef\do####1{}% Do nothing further
			\fi%
		}%
		\expandafter\docsvlist\expandafter{\keep@pages}% Process list of pages to keep
		\ifkeeppage\else\AtBeginShipoutDiscard\fi% Discard page/not
	}%
}
\newcommand{\Ical}{\mathcal{I}}
\newcommand{\Orm}{\mathrm{O}}
\renewcommand{\hbar}{\overline{h}}
\newcommand{\R}{\mathbb{R}}
\newcommand{\E}{\mathbb{E}}
\newcommand{\e}{\mathrm{e}}	
\newcommand\ib{\mbox{\boldmath $1$}}
\newcommand\zeros{\mbox{\boldmath $0$}}
\newcommand{\Ib}{{\text{{\boldmath $\mathrm{I}$}}}}
\newtheorem{assump}{Assumption}% Assumption
\newtheorem{defn}{Definition}% Definition
\theoremstyle{definition}
\DeclareMathOperator*{\maximize}{maximize}
\DeclareMathOperator{\subto}{subject\ to}
\DeclareMathOperator*{\argmax}{argmax}
\DeclareMathOperator*{\softmax}{softmax}
\newcommand{\iprod}[1]{\left\langle {#1} \right\rangle}
\newcommand{\ceil}[1]{\left\lceil {#1} \right\rceil}
\newcommand{\diag}{\mathrm{diag}}
\newcommand{\relmid}[1]{\mathrel{#1|}}
\newif\iffigure
\title{Differentiable Greedy Submodular Maximization: Guarantees, Gradient Estimators, and Applications}
\author{%
	Shinsaku Sakaue \\
	Graduate School of Information Sciences and Technology \\
	The University of Tokyo \\
	\texttt{sakaue@mist.i.u-tokyo.ac.jp} \\
  % examples of more authors
  % \And
  % Coauthor \\
  % Affiliation \\
  % Address \\
  % \texttt{email} \\
  % \AND
  % Coauthor \\
  % Affiliation \\
  % Address \\
  % \texttt{email} \\
  % \And
  % Coauthor \\
  % Affiliation \\
  % Address \\
  % \texttt{email} \\
  % \And
  % Coauthor \\
  % Affiliation \\
  % Address \\
  % \texttt{email} \\
}
\begin{document}

\maketitle

\begin{abstract}
Motivated by, e.g., sensitivity analysis and end-to-end learning, the demand for differentiable optimization algorithms has been significantly increasing. In this paper, we establish a theoretically guaranteed versatile framework that makes the greedy algorithm for monotone submodular function maximization differentiable. We smooth the greedy algorithm via randomization, and prove that it almost recovers original approximation guarantees in expectation for the cases of cardinality and $\kappa$-extensible system constrains. We also show how to efficiently compute unbiased gradient estimators of any expected output-dependent quantities. We demonstrate the usefulness of our framework by instantiating it for various applications. 
\end{abstract}

\newcommand{\pvec}{{\text{{\boldmath $\mathrm{p}$}}}}
\newcommand{\qvec}{{\text{{\boldmath $\mathrm{q}$}}}}
\newcommand{\fvec}{{\text{{\boldmath $\mathrm{f}$}}}}
\newcommand{\gvec}{{\text{{\boldmath $\mathrm{g}$}}}}
\newcommand{\xvec}{{\text{{\boldmath $\mathrm{x}$}}}}
\newcommand{\yvec}{{\text{{\boldmath $\mathrm{y}$}}}}
\newcommand{\wvec}{{\text{{\boldmath $\mathrm{w}$}}}}
\newcommand{\thetavec}{{\text{{\boldmath $\mathrm{\theta}$}}}}
\newcommand{\lambdavec}{{\text{{\boldmath $\mathrm{\lambda}$}}}}
\newcommand{\Xb}{{\text{{\boldmath $\mathrm{X}$}}}}

\newcommand{\pveck}[1]{\pvec_{#1}}
\newcommand{\gveck}[1]{\gvec_{#1}}
\newcommand{\Sk}[1]{S_{#1}}

\newcommand{\f}{f}
\newcommand{\fdel}[2]{\f_{{#2}}({#1})}
\newcommand{\fdelp}[3]{\f_{{#2}}({#1}, {#3})}
\newcommand{\F}{F}

\newcommand{\nk}{{n_k}}
\newcommand{\Uk}{{U_k}}

\newcommand{\pk}[1]{p_{#1}}
\newcommand{\svec}{{\text{{\boldmath $\mathrm{s}$}}}}

\newcommand{\Sscr}{\mathscr{S}}
\newcommand{\sgreedy}{{\sc Smoothed Greedy}}

\section{Introduction}\label{sec:introduction}
Submodular function maximization is ubiquitous in practice. 
In many situations including 
budget allocation \citep{alon2012optimizing}, 
data summarization \citep{mirzasoleiman2016fast}, and 
active learning \citep{wei2015submodularity}, 
submodular functions are modeled with parameters. 
Formally, we consider the following parametric submodular function maximization: 
\begin{align}\label{prob:main}
\maximize_{X\subseteq V}\quad \f(X, \thetavec)  \qquad \subto \quad X\in \Ical,
\end{align}
where $V$ is a set of $n$ elements, 
$\f(\cdot, \thetavec): 2^V \to \R$ is a set function with continuous-valued parameter vector $\thetavec\in \Theta$, 
and $\Ical\subseteq 2^V$ is a set family consisting of all feasible solutions. 
As is often the case, 
we assume $\f(\cdot, \thetavec)$ to be normalized, monotone, and submodular for any $\thetavec\in\Theta$ (see, \Cref{subsec:background}).

Once $\thetavec$ is fixed, 
we often apply the greedy algorithm to \eqref{prob:main} since it has strong theoretical guarantees \citep{nemhauser1978analysis,fisher1978analysis} and delivers high empirical performances. 
However, if $\thetavec$ largely deviates from unknown true $\hat \thetavec$, 
the greedy algorithm may return a poor solution to the problem of maximizing $\f(\cdot, \hat \thetavec)$. 
This motivates us to study how changes in $\thetavec$ values affect outputs of the greedy algorithm. 
Furthermore, it is desirable if we can learn $\thetavec$ from data so that the greedy algorithm can achieve high $\f(\cdot, \hat \thetavec)$ values.

A major approach to studying such subjects is to differentiate outputs of algorithms w.r.t. parameter $\thetavec$. Regarding continuous optimization algorithms, this approach has been widely studied in the field of sensitivity analysis \citep{rockafellar1998variational,gal2012advances}, 
and it is used by recent decision-focused (or end-to-end) learning methods \citep{donti2017task,wilder2019melding}, which learn to predict $\thetavec$ based on outputs of optimization algorithms. 
When it comes to the greedy algorithm for submodular maximization, 
however, its outputs are not differentiable since continuous changes in $\thetavec$ cause discrete changes in outputs.
Therefore, for using the well-established methods based on derivatives of outputs, we must employ some kind of smoothing technique.  

\citet{tschiatschek2018differentiable} 
opened the field of differentiable submodular maximization; 
they proposed greedy-based differentiable learning methods for 
monotone and non-monotone submodular functions.   
Their algorithm for monotone objective functions was obtained by replacing non-differentiable $\argmax$ with differentiable $\softmax$. 
Since then, this field has been attracting increasing attention; 
 %differentiable greedy submodular maximization has been an attractive field of study; 
another $\softmax$-based algorithm
that forms a neural network (NN) \citep{powers2018differentiable} and 
applications \citep{kalyan2019trainable,peyrard2019principled} 
have been studied. 
However, this field is still in its infancy 
and the following problems remain open:   

{\centering 
	\it Can we smooth the greedy algorithm without losing its theoretical guarantees? 
	\par}
{\centering 
	\it Can we develop application-agnostic efficient methods for computing derivatives?
	\par}

The first problem is important since, 
without the guarantees, 
we cannot ensure that the differentiation-based methods work well.
The existing studies \citep{tschiatschek2018differentiable,powers2018differentiable} state that the ($1-1/\e$)-approximation 
for the cardinality constrained case is obtained if the temperature of $\softmax$ is zero 
(i.e., equal to $\argmax$). 
This, however, provides no theoretical guarantees for the smoothed differentiable algorithms. 
%; without them, we cannot ensure that the differentiation-based methods work well.
%

As regards the second problem, 
the existing methods \citep{tschiatschek2018differentiable,powers2018differentiable} focus on differentiating some functions defined with subsets $X_1, X_2, \ldots \subseteq V$ given as training data. 
This restricts the scope of application; 
for example, we cannot use them for sensitivity analysis (see, \Cref{sec:comparison} for details). 
The computation cost also matters when developing differentiation methods; in \citep{tschiatschek2018differentiable}, 
a heuristic approximation method is used since the exact computation of derivatives generally incurs exponential costs in $n$.  

{\bf Our contribution} is a theoretically guaranteed versatile framework that resolves the two problems, thus greatly advancing the field of differentiable submodular maximization. 
As shown in \Cref{sec:stochastic_greedy}, 
our framework also works with the stochastic greedy algorithm \cite{mirzasoleiman2015lazier}. 
%, which is more efficient than the greedy algorithm 
%and useful when addressing large-scale instances. 
%a faster variant of the greedy algorithm applicable to the cardinality constrained case. 
%This is useful, e.g., when addressing large-scale decision-focused learning instances.
Below we describe the details.

\begin{description}
	\item[{\bf S{\scriptsize MOOTHED} G{\scriptsize REEDY}}] 
	We develop {\sgreedy} by stochastically perturbing $\argmax$; 
	this generalizes the existing algorithms  \citep{tschiatschek2018differentiable,powers2018differentiable}. 
	We prove that the perturbation does not spoil the original guarantees: 
	almost ($1-1/\e$)- and $\frac{1}{\kappa +1}$-approximation guarantees are achieved in expectation 
	for the cases of cardinality and $\kappa$-extensible system constraints, respectively, where 
	a subtractive term depending on the perturbation strength affects the guarantees.
	\item[{\bf Gradient estimation}]
	Due to the perturbation, we can differentiate expected outputs of {\sgreedy}; 
	the computation cost is, however, exponential in $n$ as with \citep{tschiatschek2018differentiable}. 
	To circumvent this, we show how to compute unbiased gradient estimators of any expected output-dependent quantities by sampling {\sgreedy} outputs.     
	This enables us to efficiently estimate derivatives of, e.g., expected objective values and the probability that  each $v\in V$ is chosen. 
	\item[{\bf Applications}] 
	We demonstrate that our framework can serve as a bridge between the greedy algorithm and differentiation-based methods in many applications. When used for sensitivity analysis, it elucidates how outputs of {\sgreedy} can be affected by changes in $\thetavec$ values. Results of decision-focused learning experiments suggest that our greedy-based approach can be a simple and effective alternative to a recent continuous relaxation method \citep{wilder2019melding}. 
\end{description}

%In \Cref{sec:stochastic_greedy}, 
%we show that our framework also works with the stochastic greedy algorithm~\cite{mirzasoleiman2015lazier}, which is a faster variant of the greedy algorithm applicable to the cardinality constrained case. 
%This is useful, e.g., when addressing large-scale decision-focused learning instances.

\subsection{Related work}\label{subsec:related}
\citet{nemhauser1978analysis} proved the ($1-1/\e$)-approximation guarantee of the greedy algorithm for 
the cardinality constrained case, and this result is known to be optimal \citep{nemhauser1978best, feige1998threshold}. 
\citet{fisher1978analysis} proved that the greedy algorithm achieves the $\frac{1}{\kappa+1}$-approximation 
if $(V, \Ical)$ is an intersection of $\kappa$ matroids; 
later, this result was extended to the class of $\kappa$-systems \citep{calinescu2011maximizing}, 
which includes $\kappa$-extensible systems.

%\citet{calinescu2011maximizing} showed that this result holds 
%if $(V, \Ical)$ belongs to a more general class called the , 
%which includes . 

%Parameterized submodular functions appear in many applications including 
%budget allocation \citep{alon2012optimizing}, 
%data summarization \citep{mirzasoleiman2016fast}, and 
%active learning \citep{wei2015submodularity}. 
%\citet{dolhansky2016deep} developed 
%a flexible family of NN-based parameterized submodular functions, 
%called deep submodular functions.  

Differentiable greedy submodular maximization is studied in \citep{tschiatschek2018differentiable,powers2018differentiable}. 
Our work is different from them in terms of 
theoretical guarantees, differentiation methods, and problem settings 
as explained above (see, also \Cref{sec:comparison}). 
The closest to our result is perhaps that of the continuous relaxation method \citep{wilder2019melding}. 
Specifically, they use the multilinear extension \citep{calinescu2011maximizing} of $\f(\cdot, \thetavec)$ 
and differentiate its
local optimum computed with the stochastic gradient ascent method (SGA) \citep{hassani2017gradient}, 
which achieves a $1/2$-approximation.  
Their method can be used for matroid constraints, but their analysis focuses on the cardinality constrained case. 
Compared with this, our method is 
advantageous in terms of approximation ratios and 
empirical performances (see, \Cref{subsec:exp_decision}).
Note that our method is also different from 
sampling-based methods for leaning 
submodular functions (e.g., \citep{balcan2011learning,rosenfeld2018learning}).

Differentiable end-to-end learning has been studied in many other settings: 
submodular minimization \citep{djolonga2017differentiable}, 
quadratic programming \citep{amos2017optnet}, 
mixed integer programming \citep{ferber2020mipaal}, 
optimization on graphs \citep{wilder2019end}, 
combinatorial linear optimization \citep{pogancic2020differentiation}, 
satisfiability (SAT) instances \citep{wang2019satnet}, 
and ranking/sorting \citep{cuturi2019differentiable}.
%and 
%satisfiability (SAT) instances \citep{wang2019satnet}. 
%
%Probably the first study that developed 
%a differentiable end-to-end learning method is that of \citet{bengio1997using}, 
%which focuses on the financial application. 

Perturbation-based smoothing is used for, e.g., online learning \citep{abernethy2016perturbation}, linear contextual bandit \citep{kannan2018smoothed}, linear optimization \citep{berthet2020learning}, 
and sampling from discrete distributions \citep{gumbel1954statistical,jang2017categorical,maddison2017concrete}, but it has not been theoretically studied for smoothing the greedy algorithm for monotone submodular maximization. 
%
%
%
%The design of  {\sgreedy} is related to the link between perturbation and regularization (see, e.g., \citep{abernethy2016perturbation}). 

\subsection{Notation and definition}\label{subsec:background}
For any set function $\f:2^V\to\R$, we define $\fdel{Y}{X} \coloneqq \f(X \cup Y) - \f(X)$. 
We say $\f$ is 
normalized if $\f(\emptyset) = 0$, 
monotone if $X\subseteq Y$ implies $\f(X) \le \f(Y)$, 
and 
submodular if $\fdel{v}{X} \ge \fdel{v}{Y}$ 
for all $X\subseteq Y$ and $v \notin Y$. 
In this paper, 
we assume the objective function, $\f(\cdot, \thetavec)$, to be 
normalized, monotone, and submodular for any $\thetavec\in\Theta$. 
Note that this is the case with many set functions, e.g., 
weighted coverage functions with non-negative weights $\thetavec$, 
probabilistic coverage functions with probabilities $\thetavec$, 
and deep submodular functions \citep{dolhansky2016deep} 
with non-negative linear-layer parameters $\thetavec$.

We say $(V, \Ical)$ is a $\kappa$-extensible system \citep{mestre2006greedy} if
the following three conditions hold: 
(i) $\emptyset\in\Ical$, 
(ii) $X\subseteq Y \in \Ical$ implies $X \in \Ical$, 
and
(iii) for all $X\in \Ical$ and $v\notin X$ such that $X \cup \{v\} \in \Ical$, 
and for every $Y\supseteq X$ such that $Y\in \Ical$, 
there exists $Z \subseteq Y\backslash X$ that satisfies $|Z|\le \kappa$ and 
$Y\backslash Z \cup \{v\}\in \Ical$. 
As shown in \citep{mestre2006greedy}, 
$(V, \Ical)$ is a matroid iff it is a $1$-extensible system, 
which includes the cardinality constrained case, 
and the intersection of $\kappa$ matroids defined on a common ground set always forms a $\kappa$-extensible system. 
We say $X\in \Ical$ is maximal if no $Y\in\Ical$ strictly includes $X$. 
We define $K \coloneqq \max_{X\in \Ical} |X|$, which is so-called the rank of $(V, \Ical)$.

For any positive integer $n$, 
we let $\zeros_n$ and $\ib_n$ be $n$-dimensional all-zero and all-one vectors, respectively. 
For any finite set $V$ and $S\subseteq V$, 
we let $\ib_S \in \R^{|V|}$ denote 
the indicator vector of $S$; i.e., 
the entries corresponding to $S$ are $1$ and the others are $0$. 
Given any scalar- or vector-valued differentiable function $\fvec:\R^n \to \R^m$, 
$\nabla_\xvec \fvec(\xvec) \in \R^{m\times n}$ denotes its 
gradient or Jacobian, respectively. 

\section{Smoothed greedy algorithm}\label{sec:greedy}
We present {\sgreedy} (\Cref{alg:pgreedy}) and prove its approximation guarantees. 
In this section, we take parameter $\thetavec\in \Theta$ to be fixed arbitrarily.

\begin{algorithm}[tb]
	\caption{\sgreedy}
	\label{alg:pgreedy}
	\begin{algorithmic}[1]
		\State $S \gets \emptyset$
		\For{$k = 1,2\dots$}
		\State $\Uk = \{u_1,\dots,u_{\nk}\} \gets \{v \notin S \relmid{} S\cup\{v\} \in\Ical\}$
		\State $\gveck{k}(\thetavec) = 
		(g_k(u_1, \thetavec),\dots, g_k(u_\nk, \thetavec))
		\gets (\fdelp{u_1}{S}{\thetavec},\dots,\fdelp{u_{\nk}}{S}{\thetavec})$ \label{step:g}
		\State $\pveck{k}(\thetavec) = 
		(\pk{k}(u_1, \thetavec),\dots, \pk{k}(u_\nk, \thetavec))
		\gets \argmax_{\pvec\in\Delta^{\nk}}\{\iprod{\gveck{k}(\thetavec), \pvec} - \Omega_k(\pvec)   \}$
		\State $s_k \gets u\in \Uk$ with probability $\pk{k}(u, \thetavec)$ \label{step:choose}
		\State $S\gets S\cup\{s_k\}$ 
		\If{$S$ is maximal} \Return $S$
		\EndIf
		\EndFor        
	\end{algorithmic}
\end{algorithm}

We explain the details of \Cref{alg:pgreedy}. 
In the $k$-th iteration, we compute marginal gain  
$\fdelp{u}{S}{\thetavec}$ for every addable element $u\in \Uk\coloneqq \{v\notin S \relmid{} S\cup \{v\} \in\Ical \}$; 
we define $\nk\coloneqq |\Uk|$ and 
index the elements in $\Uk$ as $u_1, \dots, u_\nk$. 
Let $\gveck{k}(\thetavec)\in\R^\nk$ denote the marginal gain vector.  
We then compute
\begin{align}\label{eq:solveomega}
\pveck{k}(\thetavec) = \argmax_{\pvec\in\Delta^\nk}\{\iprod{\gveck{k}(\thetavec), \pvec} - \Omega_k(\pvec)   \},
\end{align}
where $\Delta^\nk\coloneqq \{ \xvec \in \R^{\nk}  \relmid{} \xvec\ge \zeros_\nk, \iprod{\xvec, \ib_\nk}=1  \}$ is 
the ($\nk-1$)-dimensional probability simplex 
and   
$\Omega_k:\R^{\nk}\to \R$ is a strictly convex function; 
we call $\Omega_k$ a regularization function.  
Note that 
the strict convexity implies the uniqueness of $\pveck{k}(\thetavec)$.\footnote{
	Note that $\pveck{k}(\thetavec)$ depends on the past $k-1$ steps, 
	which we do not write explicitly for simplicity.
%	A remark on the notation, $\pveck{k}(\thetavec)$:  
%	since $\pveck{k}(\thetavec)$ depends on $\fdelp{u}{S}{\thetavec}$ where $S = \{s_1,\dots, s_{k-1}\}$, 
%	it would be better to explicitly write the dependence on the past $k-1$ steps as, e.g., $\pveck{k}(\thetavec, s_1,\dots, s_{k-1})$. 
%	For simplicity, however, we omit $s_1,\dots, s_{k-1}$ in what follows; 
%	the dependence will be clear from the context.  
}  
We then choose an element, $u\in \Uk$, with probability $\pk{k}(u, \thetavec)$; 
let $s_k$ denote the chosen element.  
The above procedure can be seen as a stochastically perturbed version of $\argmax$; 
without $\Omega_k$, we have $s_k \in \argmax_{u \in \Uk} \fdelp{u}{S}{\thetavec}$.

We then study theoretical guarantees of {\sgreedy} 
(we present all proofs in \Cref{asec:proofs}).
Let $\delta\ge0$ be a constant that satisfies
$\delta\ge \Omega_k(\pvec) - \Omega_k(\qvec)$ 
for all $k=1,\dots, |S|$ and $\pvec, \qvec\in \Delta^{\nk}$. 
We will shortly see that 
smaller $\delta$ values yield better guarantees;  
we present examples of $\Omega_k$ and their $\delta$ values at the end of this section.

%As is often the case with the analysis of the greedy algorithm, 
As is often done, we begin by lower bounding the marginal gain. 
The following lemma elucidates the effect of $\delta$ 
and plays a key role when proving the subsequent theorems. 
\begin{restatable}{lem}{marginal}\label{lem:fdel}
	In any $k$-th step, conditioned on the ($k-1$)-th step 
	(i.e., $S = \{ s_1,\dots, s_{k-1} \}$ is arbitrarily fixed), we have 
	$\E[\fdelp{s_k}{S}{\thetavec}] \ge \fdelp{u}{S}{\thetavec} - \delta$
	for any $ u\in \Uk$.
\end{restatable}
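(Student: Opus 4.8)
The plan is to rewrite the conditional expectation as a linear functional evaluated at $\pveck{k}(\thetavec)$ and then exploit that $\pveck{k}(\thetavec)$ maximizes $\iprod{\gveck{k}(\thetavec),\pvec}-\Omega_k(\pvec)$ over the simplex $\Delta^\nk$, comparing its value against the vertex corresponding to the fixed element $u$.

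First I would condition on the first $k-1$ steps, so that $S=\{s_1,\dots,s_{k-1}\}$, the addable set $\Uk$ with $\nk=|\Uk|$, the marginal-gain vector $\gveck{k}(\thetavec)$, the regularizer $\Omega_k$, and the distribution $\pveck{k}(\thetavec)$ are all deterministic. Since $s_k$ equals $u\in\Uk$ with probability $\pk{k}(u,\thetavec)$ and $\gveck{k}(\thetavec)$ collects the values $\fdelp{u}{S}{\thetavec}$, we get
\[
\E[\fdelp{s_k}{S}{\thetavec}]=\sum_{u\in\Uk}\pk{k}(u,\thetavec)\,\fdelp{u}{S}{\thetavec}=\iprod{\gveck{k}(\thetavec),\pveck{k}(\thetavec)}.
\]

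Next, fix an arbitrary $u\in\Uk$ and let $\ib_{\{u\}}\in\Delta^\nk$ denote the indicator vector of $\{u\}$ (a vertex of the simplex), so that $\iprod{\gveck{k}(\thetavec),\ib_{\{u\}}}=\fdelp{u}{S}{\thetavec}$. Because $\pveck{k}(\thetavec)$ solves \eqref{eq:solveomega} and $\ib_{\{u\}}$ is feasible,
\[
\iprod{\gveck{k}(\thetavec),\pveck{k}(\thetavec)}-\Omega_k(\pveck{k}(\thetavec))\ \ge\ \iprod{\gveck{k}(\thetavec),\ib_{\{u\}}}-\Omega_k(\ib_{\{u\}})\ =\ \fdelp{u}{S}{\thetavec}-\Omega_k(\ib_{\{u\}}),
\]
and rearranging gives $\iprod{\gveck{k}(\thetavec),\pveck{k}(\thetavec)}\ge \fdelp{u}{S}{\thetavec}-\bigl(\Omega_k(\ib_{\{u\}})-\Omega_k(\pveck{k}(\thetavec))\bigr)$. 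Applying the defining property of $\delta$ with $\pvec=\ib_{\{u\}}$ and $\qvec=\pveck{k}(\thetavec)\in\Delta^\nk$ (recall $k\le|S|$) bounds the parenthesised term by $\delta$, and combining this with the first display yields $\E[\fdelp{s_k}{S}{\thetavec}]\ge\fdelp{u}{S}{\thetavec}-\delta$, as claimed.

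I do not expect a genuine obstacle here: the proof is essentially a one-line consequence of the variational characterization of $\pveck{k}(\thetavec)$. The only points requiring a little care are bookkeeping — treating every per-step quantity as deterministic after conditioning on the first $k-1$ steps — and noting that strict convexity of $\Omega_k$ is used only to make $\pveck{k}(\thetavec)$ well-defined, whereas this lemma needs nothing beyond the fact that $\pveck{k}(\thetavec)$ attains the maximum in \eqref{eq:solveomega}.
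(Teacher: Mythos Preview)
Your proof is correct and follows essentially the same approach as the paper: express the conditional expectation as $\iprod{\gveck{k}(\thetavec),\pveck{k}(\thetavec)}$, compare against the simplex vertex $\ib_u$ via the optimality of $\pveck{k}(\thetavec)$ in \eqref{eq:solveomega}, and absorb the residual $\Omega_k(\ib_u)-\Omega_k(\pveck{k}(\thetavec))$ into $\delta$. One tiny notational caution: in your parenthetical ``recall $k\le|S|$'' the symbol $S$ refers to the final output (as in the paper's definition of $\delta$), not to the current $S=\{s_1,\dots,s_{k-1}\}$ of size $k-1$ from the lemma statement; it may be worth disambiguating.
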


Let $S$  and $O$ be an output of \Cref{alg:pgreedy}
and 
a maximal optimal solution to problem \eqref{prob:main}, respectively. 
In the cardinality constrained case, we can obtain the following guarantee. 
We also show in \Cref{thm:ssgreedy} (\Cref{subsec:stoc_approximation_guarantee}) that the faster stochastic variant \citep{mirzasoleiman2015lazier} can achieve a similar approximation guarantee. 

\begin{restatable}{thm}{cardinality}\label{thm:cardinality} 
	If $\Ical = \{ X \subseteq V \relmid{} |X| \le K  \}$, we have 
	$\E[\f(S, \thetavec)] \ge (1-1/\e) \f(O,\thetavec) - \delta K$. 
\end{restatable}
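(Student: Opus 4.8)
The plan is to mimic the classical greedy analysis of \citet{nemhauser1978analysis}, using \Cref{lem:fdel} to pay an additive $\delta$ per iteration for the perturbation. First I would observe that under the cardinality constraint $\Ical=\{X\subseteq V\relmid{}|X|\le K\}$ a feasible set is maximal exactly when it has size $K$; in particular \Cref{alg:pgreedy} performs exactly $K$ iterations (the case $|V|=K$ being trivial, since then $S=V\supseteq O$ and monotonicity already gives the bound), returning $S=\Sk{K}$ with $\Sk{k}=\{s_1,\dots,s_k\}$. Moreover, at step $k$ the addable set is $\Uk=V\setminus\Sk{k-1}$, so $O\setminus\Sk{k-1}\subseteq\Uk$.

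Next, fixing the history $\Sk{k-1}$, I would combine monotonicity and submodularity (subadditivity of marginal gains) to get the standard ``gap'' bound: $\f(O,\thetavec)-\f(\Sk{k-1},\thetavec)\le\fdelp{O\setminus\Sk{k-1}}{\Sk{k-1}}{\thetavec}\le\sum_{v\in O\setminus\Sk{k-1}}\fdelp{v}{\Sk{k-1}}{\thetavec}\le K\max_{u\in\Uk}\fdelp{u}{\Sk{k-1}}{\thetavec}$, where the last inequality uses $|O\setminus\Sk{k-1}|\le|O|\le K$. Applying \Cref{lem:fdel} conditioned on $\Sk{k-1}$ then yields $\E[\fdelp{s_k}{\Sk{k-1}}{\thetavec}\mid\Sk{k-1}]\ge\max_{u\in\Uk}\fdelp{u}{\Sk{k-1}}{\thetavec}-\delta\ge\frac1K\big(\f(O,\thetavec)-\f(\Sk{k-1},\thetavec)\big)-\delta$.

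Taking the outer expectation over the algorithm's internal randomness and writing $a_k\coloneqq\E[\f(\Sk{k},\thetavec)]$ (so $a_0=\f(\emptyset,\thetavec)=0$), the previous inequality becomes the linear recursion $\f(O,\thetavec)-a_k\le(1-1/K)\big(\f(O,\thetavec)-a_{k-1}\big)+\delta$. Unrolling it over $k=1,\dots,K$ gives $\f(O,\thetavec)-a_K\le(1-1/K)^K\f(O,\thetavec)+\delta\sum_{i=0}^{K-1}(1-1/K)^i\le\e^{-1}\f(O,\thetavec)+\delta K$, since $(1-1/K)^K\le\e^{-1}$ and $\sum_{i=0}^{K-1}(1-1/K)^i\le K$. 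Rearranging yields $\E[\f(S,\thetavec)]=a_K\ge(1-1/\e)\f(O,\thetavec)-\delta K$, as claimed.

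The argument is essentially routine, so I do not expect a real obstacle; the one point to handle with care is the order of conditioning. \Cref{lem:fdel} is a bound conditional on the realized partial solution $\Sk{k-1}$, and the ``gap'' inequality of the second step likewise holds pointwise for every realization of $\Sk{k-1}$; only after taking the outer expectation does one obtain a closed recursion in the scalars $a_k$. It is also worth double-checking that the number of iterations is deterministically $K$, so that no stopping-time subtleties arise — this is special to the pure cardinality constraint and would need a separate treatment in the $\kappa$-extensible case.
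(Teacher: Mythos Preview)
Your proposal is correct and follows essentially the same route as the paper's proof: apply \Cref{lem:fdel} together with monotonicity and submodularity to get the per-step inequality $\E[\f(\Sk{k},\thetavec)]-\E[\f(\Sk{k-1},\thetavec)]\ge\frac{1}{K}\bigl(\f(O,\thetavec)-\E[\f(\Sk{k-1},\thetavec)]\bigr)-\delta$, then unroll the recursion and bound $(1-1/K)^K\le\e^{-1}$ and $\sum_{i=0}^{K-1}(1-1/K)^i\le K$. The only cosmetic differences are that the paper averages \Cref{lem:fdel} over $v\in O\setminus\Sk{k-1}$ rather than passing through $\max_{u\in\Uk}$, and it does not single out the degenerate case $|V|\le K$ (your parenthetical should really read $|V|\le K$, not $|V|=K$, but the point is the same).
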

For the more general case of $\kappa$-extensible systems, we can prove the following theorem. 
\begin{restatable}{thm}{extensible}\label{thm:extensible}
	If $(V, \Ical)$ is a $\kappa$-extensible system with rank $K$, we have 
	$\E[\f(S, \thetavec)] \ge \frac{1}{\kappa+1} \f(O,\thetavec) - \delta K$. 
\end{restatable}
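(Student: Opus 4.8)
The plan is to adapt the classical charging argument for the greedy algorithm on $\kappa$-extensible systems (in the spirit of \citet{fisher1978analysis} and \citet{mestre2006greedy}) to the randomized setting, plugging in \Cref{lem:fdel} at the crucial point and being careful with the expectation over the run. Write $S=\{s_1,\dots,s_m\}$ with $s_i$ the element added in the $i$-th step, $S_i=\{s_1,\dots,s_i\}$ (so $S_0=\emptyset$, $S_m=S$), and let $U_i=\{v\notin S_{i-1}\mid S_{i-1}\cup\{v\}\in\Ical\}$ be the candidate set at step $i$; note $m\le K$ since each step adds a fresh element. First I would build, along \emph{any} realization of the run, an online charging map: disjoint groups $C_1,\dots,C_m\subseteq O$ with $|C_i\setminus S|\le\kappa$ whose union contains $O\setminus S$, each $o\in C_i$ being addable at step $i$ (i.e.\ $o\in U_i$). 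Set $O_0=O$ and maintain the invariant $S_i\cup O_i\in\Ical$, which holds for $i=0$ since $O\in\Ical$. Given $O_{i-1}$ with $S_{i-1}\cup O_{i-1}\in\Ical$, apply property (iii) in the definition of $\kappa$-extensible systems with $X=S_{i-1}$, $v=s_i$ (legitimate since $S_{i-1}\cup\{s_i\}=S_i\in\Ical$) and $Y=S_{i-1}\cup O_{i-1}\supseteq X$: this yields $Z\subseteq O_{i-1}\setminus S_{i-1}$ with $|Z|\le\kappa$ and $(S_{i-1}\cup O_{i-1})\setminus Z\cup\{s_i\}\in\Ical$. Put $O_i=O_{i-1}\setminus(Z\cup\{s_i\})$ and $C_i=O_{i-1}\setminus O_i$; a short calculation shows $S_i\cup O_i=(S_{i-1}\cup O_{i-1})\setminus Z\cup\{s_i\}\in\Ical$, preserving the invariant, and $|C_i\setminus S|\le\kappa$ (the extra element $s_i$ is removed only when $s_i\in O_{i-1}$, but then $s_i\in S$, so it never contributes to $C_i\setminus S$). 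Since $O_0\supseteq O_1\supseteq\cdots$, the $C_i$ are disjoint with $\bigcup_i C_i=O\setminus O_m$; because the algorithm stops with $S$ maximal and $S\cup O_m\in\Ical$, we get $O_m\subseteq S$, hence $O\setminus S\subseteq\bigcup_i C_i$. Finally, $o\in C_i\subseteq O_{i-1}$ together with $S_{i-1}\cup O_{i-1}\in\Ical$ and downward closure gives $S_{i-1}\cup\{o\}\in\Ical$, and the invariant forces $S_{i-1}\cap O_{i-1}=\emptyset$, so indeed $o\in U_i$.

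Next I would derive a pathwise inequality. By monotonicity $\f(O,\thetavec)\le\f(S\cup O,\thetavec)$, and by submodularity (subadditivity of marginal gains) $\f(S\cup O,\thetavec)\le\f(S,\thetavec)+\sum_{o\in O\setminus S}\fdelp{o}{S}{\thetavec}$. Splitting the last sum along the charging map, using submodularity once more ($S_{i-1}\subseteq S$ and $o\notin S$ imply $\fdelp{o}{S}{\thetavec}\le\fdelp{o}{S_{i-1}}{\thetavec}$), bounding $\fdelp{o}{S_{i-1}}{\thetavec}\le A_i:=\max_{u\in U_i}\fdelp{u}{S_{i-1}}{\thetavec}$ for $o\in C_i\setminus S\subseteq U_i$, and using $|C_i\setminus S|\le\kappa$, I obtain the deterministic bound $\f(O,\thetavec)\le\f(S,\thetavec)+\kappa\sum_{i=1}^m A_i$; also $\f(S,\thetavec)=\sum_{i=1}^m\fdelp{s_i}{S_{i-1}}{\thetavec}$ and $A_i\ge0$ by monotonicity.

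Then I would take expectations. \Cref{lem:fdel} gives, conditioned on $S_{i-1}$ (which determines $U_i$ and hence $A_i$), $\E[\fdelp{s_i}{S_{i-1}}{\thetavec}\mid S_{i-1}]\ge A_i-\delta$ whenever step $i$ is executed. Since the event that the run reaches step $i$ is measurable with respect to $S_{i-1}$ and $m\le K$, summing over steps via the tower property yields $\E[\f(S,\thetavec)]\ge\E\!\big[\sum_{i=1}^m A_i\big]-\delta\,\E[m]\ge\E\!\big[\sum_{i=1}^m A_i\big]-\delta K$, i.e.\ $\E\!\big[\sum_{i=1}^m A_i\big]\le\E[\f(S,\thetavec)]+\delta K$. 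Combining with the pathwise bound, $\f(O,\thetavec)\le\E[\f(S,\thetavec)]+\kappa\big(\E[\f(S,\thetavec)]+\delta K\big)=(\kappa+1)\E[\f(S,\thetavec)]+\kappa\delta K$, and rearranging gives $\E[\f(S,\thetavec)]\ge\frac{1}{\kappa+1}\f(O,\thetavec)-\frac{\kappa}{\kappa+1}\delta K\ge\frac{1}{\kappa+1}\f(O,\thetavec)-\delta K$.

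The hard part will be the first step: making the exchange/charging argument go through for every realization of the randomized greedy while keeping every group size at most $\kappa$ — in particular handling the case $s_i\in O$ so that the would-be $(\kappa+1)$-th charged element is exactly $s_i\in S$ and hence harmless. A secondary point of care is that $m$ is a random stopping time adapted to $(S_i)$, so the per-step conditional guarantee of \Cref{lem:fdel} must be aggregated by the tower property without leaking more than $\delta K$ of error; the remainder is routine bookkeeping with monotonicity and submodularity.
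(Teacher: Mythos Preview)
Your proof is correct and reaches the same bound, but it follows a genuinely different route from the paper's. The paper maintains the invariant $S_i\subseteq O_i\in\Ical$ (applying axiom (iii) with $Y=O_{i-1}$), sets $O_i=O_{i-1}\setminus Z_i\cup\{s_i\}$, and directly telescopes the inequality $\kappa\,(\E[\f(S_i)]-\E[\f(S_{i-1})]+\delta)\ge\E[\f(O_{i-1})]-\E[\f(O_i)]$ over $i=1,\dots,K$, padding $S_i=S$ for $i>|S|$ to absorb the random stopping time. You instead maintain $S_i\cup O_i\in\Ical$ with $S_i\cap O_i=\emptyset$ (applying (iii) with $Y=S_{i-1}\cup O_{i-1}$), extract explicit charging groups $C_i$ of size at most $\kappa$ outside $S$, derive the \emph{pathwise} bound $\f(O)\le\f(S)+\kappa\sum_{i\le m}A_i$, and only then take expectations, using \Cref{lem:fdel} to control $\E[\sum_i A_i]$ via the tower property on the stopping event $\{m\ge i\}$. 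The paper's potential-function telescoping is closer to the proof in \citep{calinescu2011maximizing} and packages the randomness step by step; your charging formulation, closer in spirit to \citep{fisher1978analysis,mestre2006greedy}, cleanly separates the deterministic combinatorics (the $C_i$) from the stochastic slack ($A_i$ versus $\E[\fdel{s_i}{S_{i-1}}]$), which makes the origin of the $\delta K$ term especially transparent. Both arguments yield the intermediate $\frac{\kappa}{\kappa+1}\delta K$ before weakening. One minor wording issue: the disjointness $S_{i-1}\cap O_{i-1}=\emptyset$ is not forced by ``the invariant'' $S_i\cup O_i\in\Ical$ per se but by your construction (you remove $s_i$ from $O_i$ each step); the inductive verification you need is exactly the one you sketch, so this is cosmetic.
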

\begin{proof}[Proof sketch of \Cref{thm:extensible}]
	First, we briefly review the proof for the standard greedy algorithm \citep{calinescu2011maximizing}. 
	For a series of subsets $\emptyset = S_0 \subseteq S_1 \subseteq \dots \subseteq S_{|S|} = S$ obtained in $|S|$ steps of the greedy algorithm, 
	we construct a series of subsets $O = O_0, O_1\dots,O_{|S|} = S$ that satisfies $S_i \subseteq O_i \in\Ical$ 
	and 
	$ \kappa \cdot (\f(S_{i}, \thetavec)- \f(S_{i-1}, \thetavec)) \ge \f(O_{i-1}, \thetavec) - \f(O_i, \thetavec) $ for $i=1,\dots,|S|$.
	The $\frac{1}{\kappa +1}$-approximation is obtained by summing both sides for $i=1,\dots,|S|$. 
	Our proof extends this analysis to the randomized {\sgreedy}. 
	We construct $O_0, O_1\dots$ 
	for each realization of the randomness, and prove
	\[
	\kappa \cdot (\E[\f(S_{i}, \thetavec)] - \E[\f(S_{i-1}, \thetavec)] + \delta) \ge \E[\f(O_{i-1}, \thetavec)] - \E[\f(O_i, \thetavec)]
	\]	 
	for $i=1,\dots, K$ by using \Cref{lem:fdel}, where we must carefully deal with the fact 
	that $|S| < K$ may occur in some realizations. 
	By summing both sides for $i=1,\dots, K$, 
	we obtain \Cref{thm:extensible}.    
\end{proof}

Existing guarantees 
\citep{tschiatschek2018differentiable,powers2018differentiable} 
only consider the case of \Cref{thm:cardinality} with $\delta = 0$.  
Therefore, our results bring significant progress in theoretically understanding differentiable submodular maximization. 

Below we showcase two examples of regularization function $\Omega_k$: entropy and quadratic functions.  
We can also use other strictly convex functions, e.g., a convex combination of the two functions. 
Note that 
when designing $\Omega_k$, an additional differentiability condition (see, \Cref{assump:diff_p} in \Cref{sec:gradient}) must be satisfied for making expected outputs of {\sgreedy} differentiable.

\paragraph{Entropy function}
Let 
$\Omega_k(\pvec) = \epsilon \sum_{i=1}^{\nk} p(u_i) \ln p(u_i)$, 
where 
$p(u_i)$ is the $i$-th entry of $\pvec\in [0,1]^\nk$ 
and $\epsilon > 0$ is an arbitrary constant. 
In this case, we have 
$\delta = \epsilon \ln \nk$, 
and thus we can make the $\delta$ value arbitrarily small 
by controlling the $\epsilon$ value. 
Moreover, Steps \ref{step:g} to \ref{step:choose} can be efficiently performed 
via softmax sampling as with \citep{tschiatschek2018differentiable,powers2018differentiable}; 
i.e., $p_k(u, \thetavec) \propto \exp(\fdelp{u}{S}{\thetavec} / \epsilon)$ 
(see, \Cref{subsec:entropy}).
\paragraph{Quadratic function}
We can use strongly convex quadratic functions as $\Omega_k$. 
To be specific, if we let $\Omega_k(\pvec) = \epsilon \| \pvec \|_2^2$, 
then $\delta = \epsilon(1 - 1/\nk) \le \epsilon$. 
In this case, we need to solve quadratic programming (QP) problems for $k=1,2,\dots$. 
If we use the same $\Omega_k$ for every $k$, 
preconditioning (e.g., decomposition of Hessian matrices) is effective. 
We can also use an efficient batch QP solver~\citep{amos2017optnet}.

As above, the $\delta$ value is typically controllable, 
which we can use as a hyper-parameter that balances the trade-off between the approximation guarantees and smoothness.  
How to set the $\delta$ value should be discussed depending on applications (see, \Cref{sec:application}).  

\newcommand{\q}{q}
\newcommand{\Q}{Q}
\newcommand{\model}{m}
\section{Gradient estimation}\label{sec:gradient}
We show how to differentiate outputs of {\sgreedy} w.r.t. $\thetavec$; 
the derivative computation method presented in this section 
also works with the stochastic version \citep{mirzasoleiman2015lazier} 
of {\sgreedy} (see, \Cref{subsec:stoc_gradient}). 
In this section, we assume the following two differentiability conditions to hold: 
\begin{assump}\label{assump:diff_f}
	For any $X\subseteq V$, 
	we assume $f(X, \thetavec)$ to be differentiable w.r.t. $\thetavec$.
\end{assump}

\begin{assump}\label{assump:diff_p}
	For any $\thetavec\in\Theta$ and $k\in\{1,\dots, K\}$, 
	let $\pvec_k(\gvec_k)$ be the maximizer, $\pvec_k(\thetavec)$, 
	in \eqref{eq:solveomega} regarded as a function of $\gvec_k(\thetavec)$. 
	We assume $\pvec_k(\gvec_k)$ to be differentiable w.r.t. $\gvec_k$. 
\end{assump}

\Cref{assump:diff_f} is inevitable; the existing studies \citep{tschiatschek2018differentiable,powers2018differentiable,wilder2019melding} 
are also based on this condition. 
Examples of functions satisfying \Cref{assump:diff_f} include 
weighted coverage functions (w.r.t. weights of covered vertices), 
probabilistic coverage functions \citep{wilder2019melding},
and deep submodular functions with smooth activation functions \citep{dolhansky2016deep}. 
At the end of this section, 
we discuss what occurs if \Cref{assump:diff_f} fails to hold and possible remedies for addressing such cases in practice. 

\Cref{assump:diff_p} can be satisfied by appropriately designing $\Omega_k$. 
For example, if $\Omega_k$ is the entropy function,  
the $i$-th entry of $\pvec_k(\gvec_k)$ is $\exp( \epsilon^{-1}  g_k(u_i, \thetavec) ) / \sum_{u\in\Uk} \exp( \epsilon^{-1} g_k(u, \thetavec))$, 
which is differentiable w.r.t. $\gvec_k$. 
%If we use strongly convex quadratic functions as $\Omega_k$, 
%the differentiability condition holds if the strict complementarity is satisfied at $\pvec_k(\thetavec)$ (see, \citep{amos2017optnet}). 
In \Cref{subsec:sufficient}, 
we present a sufficient condition for $\Omega_k$ to satisfy \Cref{assump:diff_p}.

We then introduce the probability distribution of {\sgreedy} outputs.\footnote{
	Although a similar notion is considered in \citep{tschiatschek2018differentiable}, 
	our way of using it is completely different (see, \Cref{sec:comparison}).
}

\begin{defn}[Output distribution]
	Let $\Sscr_{\le K}$ denote the set of all sequences consisting of at most $K$ elements in $V$. 
	For any fixed $\thetavec\in \Theta$, we define  
	$p(\thetavec):\Sscr_{\le K}\to[0,1]$ as the probability distribution function 
	of {\sgreedy} outputs, i.e., $S\sim p(\thetavec)$, 
	which we refer to as the output distribution. 
	We use $p(S, \thetavec)\in[0,1]$ to denote the probability that $S\in\Sscr_{\le K}$ is returned by {\sgreedy}.     
	Specifically, 
	for sequence $S = (s_1, \dots, s_{|S|}) \in \Sscr_{\le K}$ constructed by {\sgreedy}, 
	we let $p(S, \thetavec) = \prod_{k=1}^{|S|} \pk{k}(s_k, \thetavec)$, 
	where $\pk{k}(s_k, \thetavec)$ is the entry of $\pvec_k(\thetavec)$
	corresponding to $s_k\in \Uk$. 
\end{defn}

%For example, if $V=\{1,2,3\}$, 
%%then $\Sscr_{\le 2} =\{ (), (1), (2), (3), (1,2), (2,3), (1,3), (2,1),  (3,2), (3,1) \}$. 
%then $\Sscr_{\le 2}$ consists of 
%$()$, 
%$(1) $, 
%$(2) $, 
%$(3) $, 
%$(1,2)$, 
%$(2,3)$, 
%$(1,3)$, 
%$(2,1)$, 
%$(3,2)$, and 
%$(3,1)$. 
%Note that 
%for a sequence $S = (s_1, \dots, s_{|S|})$ constructed by {\sgreedy}, 
%we have $p(S, \thetavec) = \prod_{k=1}^{|S|} \pk{k}(s_k, \thetavec)$, 
%where $\pk{k}(s_k, \thetavec)$ is the entry of $\pvec_k(\thetavec)$
%corresponding to $s_k\in \Uk$. 
%%

We present our derivative computation method. 
Let $\Q(S)$ be any scalar- or vector-valued quantity; see,  \Cref{sec:application} for examples of $Q(S)$. 
We aim to compute 
$\nabla_\thetavec \E_{S\sim p(\thetavec)}[\Q(S)] = 
\Sigma_{S\in\Sscr_{\le K}} \Q(S)  \nabla_\thetavec p(S,\thetavec)$. 
Since the size of $\Sscr_{\le K}$ is exponential in $K=\mathrm{O}(n)$, 
we usually cannot compute the exact derivative in practice. 
Therefore, we instead use the following unbiased estimator of the derivative:\footnote{
	The above type of estimator is called the score-function gradient estimator \citep{rubinstein1996score} 
	(a.k.a. the likelihood estimator \citep{glynn1990likelihood} and REINFORCE \citep{williams1992simple}). 
	Other than this, 
	there are several major gradient estimators (see, \citep{mohamed2019monte}). 
	In \Cref{sec:other_grad}, we discuss why it is difficult to use those gradient estimators in our setting. 
} 
\begin{restatable}{prop}{gradient}\label{gradient} 
	An unbiased estimator of $\nabla_\thetavec \E_{S\sim p(\thetavec)}[\Q(S)]$ can be obtained by sampling $N$ outputs of {\sgreedy} as follows: 
%	$
%	\frac{1}{N} \sum_{j = 1}^{N} \Q(S_j) \nabla_\thetavec \ln p(S_j,\thetavec)  
%	$,
%	where $S_j = (s_{1}, \dots, s_{|S_j|}) \sim p(\thetavec)$. 	
	\begin{align}\label{eq:estimate}
	\frac{1}{N} \sum_{j = 1}^{N} \Q(S_j) \nabla_\thetavec \ln p(S_j,\thetavec)  
	\quad \text{where} \quad 
	S_j = (s_{1}, \dots, s_{|S_j|}) \sim p(\thetavec). 
	\end{align}
\end{restatable}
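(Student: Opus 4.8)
The plan is to establish the claimed estimator via the standard score-function (REINFORCE) identity, and then observe that the sum over sampled outputs is an empirical mean of i.i.d.\ unbiased terms. First I would write the target gradient as
\[
\nabla_\thetavec \E_{S\sim p(\thetavec)}[\Q(S)]
= \nabla_\thetavec \sum_{S\in\Sscr_{\le K}} \Q(S)\, p(S,\thetavec)
= \sum_{S\in\Sscr_{\le K}} \Q(S)\, \nabla_\thetavec p(S,\thetavec),
\]
where the interchange of $\nabla_\thetavec$ and the finite sum is trivially justified since $\Sscr_{\le K}$ is a finite set and, by \Cref{assump:diff_p} together with the definition $p(S,\thetavec) = \prod_{k=1}^{|S|}\pk{k}(s_k,\thetavec)$ and \Cref{assump:diff_f} (which makes $\gveck{k}(\thetavec)$ differentiable, hence $\pveck{k}(\thetavec)$ differentiable by the chain rule), each $p(S,\thetavec)$ is differentiable w.r.t.\ $\thetavec$.

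Next I would apply the log-derivative trick: on the set of sequences $S$ with $p(S,\thetavec) > 0$ we have $\nabla_\thetavec p(S,\thetavec) = p(S,\thetavec)\,\nabla_\thetavec \ln p(S,\thetavec)$, and sequences with $p(S,\thetavec) = 0$ contribute nothing to the sum $\sum_S \Q(S)\nabla_\thetavec p(S,\thetavec)$ provided they also have zero gradient there; since each such $S$ is a product of probabilities one of which vanishes identically in a neighborhood only if some $\pk{k}(s_k,\cdot)$ does, a mild remark suffices (or one simply restricts attention to the support, which is what {\sgreedy} actually samples from). This yields
\[
\nabla_\thetavec \E_{S\sim p(\thetavec)}[\Q(S)]
= \sum_{S\in\Sscr_{\le K}} p(S,\thetavec)\, \Q(S)\, \nabla_\thetavec \ln p(S,\thetavec)
= \E_{S\sim p(\thetavec)}\!\left[ \Q(S)\, \nabla_\thetavec \ln p(S,\thetavec) \right].
\]
Finally, since each $S_j$ in \eqref{eq:estimate} is drawn i.i.d.\ from $p(\thetavec)$, the term $\Q(S_j)\nabla_\thetavec \ln p(S_j,\thetavec)$ has expectation exactly the right-hand side above, so by linearity of expectation the average over $j=1,\dots,N$ is unbiased.

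For concreteness I would also unpack $\nabla_\thetavec \ln p(S_j,\thetavec) = \sum_{k=1}^{|S_j|} \nabla_\thetavec \ln \pk{k}(s_k,\thetavec)$, noting that each summand is computable by the chain rule as $(\nabla_{\gveck{k}} \ln \pk{k}(s_k,\thetavec))\,\nabla_\thetavec \gveck{k}(\thetavec)$, where the first factor is available by \Cref{assump:diff_p} and the second by \Cref{assump:diff_f} since $g_k(u,\thetavec) = \fdelp{u}{S_{k-1}}{\thetavec}$ is a difference of two $\f(\cdot,\thetavec)$ values; this makes clear that the estimator is efficiently computable from sampled runs. The only genuine subtlety — and the one place to be careful — is the handling of the zero-probability sequences and the exchange of differentiation with the (finite) summation; everything else is the textbook REINFORCE derivation, so I do not expect any real obstacle, and the proof will be short.
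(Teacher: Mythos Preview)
Your proposal is correct and follows essentially the same approach as the paper: both derive the score-function identity $\nabla_\thetavec \E_{S\sim p(\thetavec)}[\Q(S)] = \E_{S\sim p(\thetavec)}[\Q(S)\,\nabla_\thetavec \ln p(S,\thetavec)]$ via the log-derivative trick and then take the empirical mean. Your version is in fact more careful than the paper's (which is a one-line derivation), explicitly justifying the interchange of $\nabla_\thetavec$ with the finite sum and the handling of zero-probability sequences, so there is nothing to fix.
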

\begin{proof}
	We can immediately obtain the result from the following equation:  
	\begin{align}
	\nabla_\thetavec \E_{S\sim p(\thetavec)}[\Q(S)] 
	%&= \Sigma_{S\in\Sscr_{\le K}} \Q(S) \nabla_\thetavec p(S,\thetavec) \\ &
	= \Sigma_{S\in\Sscr_{\le K}} \Q(S) p(S,\thetavec) \nabla_\thetavec \ln p(S,\thetavec) 
	=\E_{S\sim p(\thetavec)}[\Q(S) \nabla_\thetavec \ln p(S,\thetavec)],          
	\end{align} 
	where an unbiased estimator of the RHS can be computed as described in the proposition.	
\end{proof}

The remaining problem is how to compute $\nabla_\thetavec \ln p(S,\thetavec)$ for sampled sequence $S = (s_1, \dots, s_{|S|})$. 
Since we have
$
\nabla_\thetavec \ln p(S,\thetavec)  
= \nabla_\thetavec \ln \prod_{k=1}^{|S|} \pk{k}(s_k, \thetavec)
= \sum_{k=1}^{|S|}  \frac{1}{\pk{k}(s_k, \thetavec)} \nabla_\thetavec \pk{k}(s_k, \thetavec)
$, it suffices to compute 
$\nabla_\thetavec \pk{k}(s_k, \thetavec)$ 
for $k\in \{1,\dots, |S|\}$. 
From Assumptions \ref{assump:diff_f} and \ref{assump:diff_p}, 
we can differentiate $\pveck{k}(\thetavec)$ by using the chain rule as 
$\nabla_\thetavec \pveck{k}(\thetavec)
=
\nabla_{\gvec_k} \pveck{k}(\gvec_k) \cdot \nabla_\thetavec \gvec_k(\thetavec)$, 
and the row corresponding to $s_k\in\Uk$ is equal to 
$\nabla_\thetavec \pk{k}(s_k, \thetavec)$. 
In some cases where 
we can analytically express $\pvec_k(\thetavec)$ as a simple function of $\thetavec$, 
we can directly compute $\nabla_\thetavec \ln p(S,\thetavec)$ 
via efficient automatic differentiation \citep{paszke2017automatic,baydin2018automatic}. 

Regarding the computation complexity, 
if $\Omega_k$ is the entropy function and $\nabla_\thetavec \gvec_k(\thetavec)$ is given, 
we can compute $\nabla_\thetavec \pveck{k}(\thetavec)$ in $\mathrm{O}(\nk \times \dim \Theta)$ time; 
we analyze the complexity in detail in \Cref{sec:regularization}.

%
%
%In the case of, e.g., the entropy regularization functions, once we obtain $\nabla_\thetavec \gvec_k(\thetavec)$, we can compute $\nabla_\thetavec \pveck{k}(\thetavec)$ in $\mathrm{O}(\nk)$ time; we discuss the complexity in detail in \Cref{sec:regularization}. 
%
%In some cases where we can analytically express $\pvec_k(\thetavec)$ as a simple function of $\thetavec$, 
%we can directly compute $\nabla_\thetavec \ln p(S,\thetavec)$ 
%via efficient automatic differentiation \citep{paszke2017automatic,baydin2018automatic}. 

\paragraph{Variance reduction}
The variance of the gradient estimators sometimes becomes excessive, which requires us to sample too many outputs of {\sgreedy}. 
Fortunately, there are various methods for reducing the variance of such Monte Carlo gradient estimators \citep{greensmith2004variance,tucker2017rebar,mohamed2019monte}. 
A simple and popular method is the following baseline correction \citep{williams1992simple}:  
we use $\Q(S) - \beta$ instead of $\Q(S)$, where $\beta$ is some coefficient.  
If $\beta$ is a constant, the estimator remains unbiased since $\E_{S\sim p(\thetavec)}[\nabla_\thetavec \ln p(S,\thetavec)] 
=\nabla_\thetavec \E_{S\sim p(\thetavec)}[1] = 0$.  
By appropriately setting the $\beta$ value, we can reduce the variance. 
In practice, $\beta$ is often set at the running average of $\Q(\cdot)$ values,  
which we use in the experiments (\Cref{sec:experiments}).

\paragraph{Non-differentiable cases}
If \Cref{assump:diff_f} does not hold, i.e., 
$\f(X,\thetavec)$ is not differentiable w.r.t. $\thetavec$, 
the above discussion is not correct since
the chain rule fails to hold \citep{griewank2008evaluating}. 
This issue is common with many machine learning scenarios, 
e.g., training of NNs with ReLU activation functions. 
The current state of affairs is that we disregard this issue since it rarely brings harm in practice. 
Recently, \citet{kakade2018probably} developed a subdifferentiation method for dealing with such non-differentiable cases; this result may enable us to extend the scope of our framework to non-differentiable $\f(X, \thetavec)$.

\section{Applications}\label{sec:application}
Owing to the flexible design of our framework, which accepts any computable $Q(S)$, 
we can use it in various situations. 
We here show how to apply it to sensitivity analysis and decision-focused learning. 
We also present another application related to learning of submodular models 
in \Cref{sec:other_application}.

\subsection{Sensitivity analysis}\label{subsec:sensitivity}
When addressing parametric optimization instances, the sensitivity---how and how much changes in parameter values can affect 
outputs of algorithms---is a major concern, and hence widely studied. 
In continuous optimization settings, 
most sensitivity analysis methods are based on derivatives of outputs
\citep{rockafellar1998variational,gal2012advances,bertsekas2016nonlinear}. 
In contrast, those for combinatorial settings are diverse 
\citep{gusfield1980sensitivity,bertsimas1988probabilistic,ghosh2000sensitivity,varma2019average} 
probably due to the non-differentiability; 
the score-function estimator is also used for 
analyzing the sensitivity of discrete systems (e.g., querying systems) \citep{kleijnen1996optimization}. 
As explained below, 
our gradient estimation method can be used for analyzing the sensitivity of {\sgreedy}, 
which becomes arbitrarily close to the greedy algorithm by letting $\delta$ be sufficiently small. 
This provides, 
to the best of our knowledge, 
the first method for analyzing the sensitivity of the greedy algorithm for submodular maximization.  

We analyze the sensitivity of 
the probability that each $v\in V$ 
is included in an output of {\sgreedy}, 
which can be expressed as 
$\E_{S\sim p(\thetavec)}[\ib_S]
= \Sigma_{S\in\Sscr_{\le K}} \ib_S  p(S,\thetavec)$. 
By using  
our method in \Cref{sec:gradient} with 
$\Q(S) = \ib_S$, 
we can estimate the Jacobian matrix as 
\[
\nabla_\thetavec \E_{S\sim p(\thetavec)}[\ib_S]
\approx
\frac{1}{N} \sum_{j = 1}^{N} \ib_{S_j} \nabla_\thetavec \ln p(S_j,\thetavec).  
\]
Here, given any $\thetavec$, 
the $(v, j)$ entry of the Jacobian matrix represents how and how much the infinitesimal increase in the $j$-th entry of $\thetavec$ 
affects the probability that $v\in V$ is chosen; 
this quantifies the sensitivity of each $v\in V$ to uncertainties in $\thetavec$ values.  
This information 
will be beneficial to practitioners who address tasks involving 
submodular maximization with uncertain parameters; 
for example, 
advertisers who want to know how to reliably promote their products. 
%(we will further discuss this in Broader Impact section). 
In \Cref{subsec:exp_sensitivity}, 
we experimentally demonstrate how this sensitivity analysis method works.

\subsection{Decision-focused learning}\label{subsec:decision}
We consider a situation where $\thetavec$ is computed with some predictive models (e.g., NNs). 
Let $\model(\cdot, \wvec)$ be a predictive model that maps 
some observed feature $\Xb$ to $\thetavec$, 
where $\wvec$ represents model parameters. 
We train $\model(\cdot, \wvec)$ by optimizing $\wvec$ values 
with training datasets $(\Xb_1, \thetavec_1), \dots, (\Xb_M, \thetavec_M)$. 
Given test instance $(\hat \Xb, \hat\thetavec)$, 
where $\hat\thetavec$ is the unknown true parameter, 
the trained model predicts $\thetavec=\model(\hat \Xb, \wvec)$, 
and we obtain solution $S\in \Ical$ (or, make a decision) 
by approximately maximizing $\f(\cdot, \thetavec)$.  
Our utility (decision quality) is measured by $\f(S, \hat \thetavec)$. 
This situation often occurs in real-world scenarios, e.g., budget allocation, diverse recommendation, and viral marketing (see, \citep{wilder2019melding}). 
For example, in the case of viral marketing on a social network, 
$\thetavec$ represents link probabilities, 
which we predict with $\model(\cdot, \wvec)$ for observed feature $\Xb$. 
A decision is a node subset $S$,
which we activate to maximize the influence.
Our utility is the influence spread $\f(S, \hat\thetavec)$, 
where $\hat\thetavec$ represents unknown true link probabilities.

With the decision-focused learning approach \citep{donti2017task,wilder2019melding}, 
we train predictive models in an attempt to maximize the decision quality, 
$\f(S, \hat \thetavec)$.  
This approach is empirically more effective for the above situation, 
which involves both prediction and optimization, 
than the standard two-stage approach.\footnote{
The two-stage approach deals with prediction and optimization separately; 
i.e., we train predictive models with some loss functions defined in advance and then make decisions by approximately maximizing $\f(\cdot, \thetavec)$. 
} 
By combining our framework with the decision-focused approach, 
we can train predictive models with first-order methods so that 
{\sgreedy} achieves high expected objective values. 

Below we detail how to train predictive models with 
our framework and stochastic first-order methods. 
We consider minimizing an empirical loss function defined as  
$- \frac{1}{M} \sum_{i=1}^M \E_{S\sim p(\model(\Xb_i, \wvec))}[\f(S, \thetavec_i)]$, 
where $p(\cdot)$ is the output distribution. 
In each iteration, we sample a training dataset, $(\Xb_i, \thetavec_i)$, 
and compute $\thetavec = \model(\Xb_i, \wvec)$ 
with the current $\wvec$ values. 
We then perform $N$ trials of {\sgreedy} 
to estimate the current loss function value, $-\E_{S\sim p(\thetavec)}[\f(S, \thetavec_i)]$. 
Next, we estimate the gradient by using our method with $Q(S) = \f(S, \thetavec_i)$. 
More precisely, for each $j$-th trial of {\sgreedy}, 
we compute $\nabla_\thetavec \ln p(S_j, \thetavec)$ as explained in \Cref{sec:gradient} and estimate the gradient as follows:\footnote{
	The chain rule, 
	$\nabla_\wvec \ln p(S_j, \model(\Xb_i, \wvec)) = 
	\nabla_\thetavec \ln p(S_j, \thetavec) |_{\thetavec=\model(\Xb_i, \wvec)} \cdot \nabla_\wvec \model(\Xb_i, \wvec)$, 	
	fails to hold if $\model(\cdot, \wvec)$ is not differentiable. 
	This issue is essentially the same as what we discussed in the last paragraph in \Cref{sec:gradient}, which we can usually disregard in practice.  
} 
\begin{align}
-\nabla_\wvec\E_{S\sim p(\model(\Xb_i, \wvec))}[\f(S, \thetavec_i)]
%&=
%-\nabla_\wvec\sum_{S\in\Sscr_{\le K}}\f(S, \thetavec_i) p(S, \model(\Xb_i, \wvec))\\
%&=
%-\sum_{S\in\Sscr_{\le K}}\f(S, \thetavec_i) \nabla_\thetavec p(S, \thetavec) |_{\thetavec=\model(\Xb_i, \wvec)} \cdot \nabla_\wvec \model(\Xb_i, \wvec) \\
%&
\approx
-\frac{1}{N} \sum_{j = 1}^{N} \f(S_j, \thetavec_i) \nabla_\thetavec \ln p(S_j, \thetavec) |_{\thetavec=\model(\Xb_i, \wvec)} \cdot \nabla_\wvec \model(\Xb_i, \wvec).  
\end{align}
%where $\nabla_\wvec \model(\Xb_i, \wvec)$ is obtained by differentiating predictive model $\model$. 
Note that the $N$ trials of {\sgreedy}, 
as well as the computation of $\nabla_\thetavec \ln p(S_j, \thetavec)$, can be performed in parallel. 
We then update $\wvec$ with the above gradient estimator. 
When using mini-batch updates, 
we accumulate the loss values and gradient estimators 
over datasets in a mini-batch, 
and then update $\wvec$.  
Experiments in \Cref{subsec:exp_decision} confirm the practical effectiveness of the above method.

In this setting, the $\delta$ value of $\Omega_k$ should not be too small. 
This is because in early stages of training, {\sgreedy} with small $\delta$ values may overfit to outputs of the predictive model that is not well trained. 
It can be effective to control the $\delta$ values depending on the stages of training.

\section{Experiments}\label{sec:experiments}
We evaluate our method with sensitivity analysis and decision-focused learning instances. 
As a regularization function of \Cref{alg:pgreedy}, we use 
the entropy function
%$\Omega_k(\pvec) = \epsilon \sum_{u\in \Uk} p(u) \ln p(u)$ 
with $\epsilon = 0.2$.  
All experiments are performed on a $64$-bit macOS machine with $1.6$GHz Intel Core i$5$ CPUs and $16$GB RAMs.

We use bipartite influence maximization instances described as follows. 
Let $V$ and $T$ be sets of items and targets, respectively, 
and $\thetavec\in [0,1]^{V \times T}$ be link probabilities.  
We aim to maximize the expected number of influenced targets, 
$\f(X, \thetavec) = \sum_{t\in T} \left(1 -  \prod_{v\in X}  (1 - \theta_{v, t})  \right)$, by choosing up to $K$ items.  

In \Cref{subsec:exp_blackbox}, 
we perform experiments with another setting, where we consider learning deep submodular functions under a partition matroid constraint. 

\begin{figure}[tb]
	\centering
	\begin{minipage}[t]{.2\textwidth}
		\includegraphics[width=1.0\textwidth]{./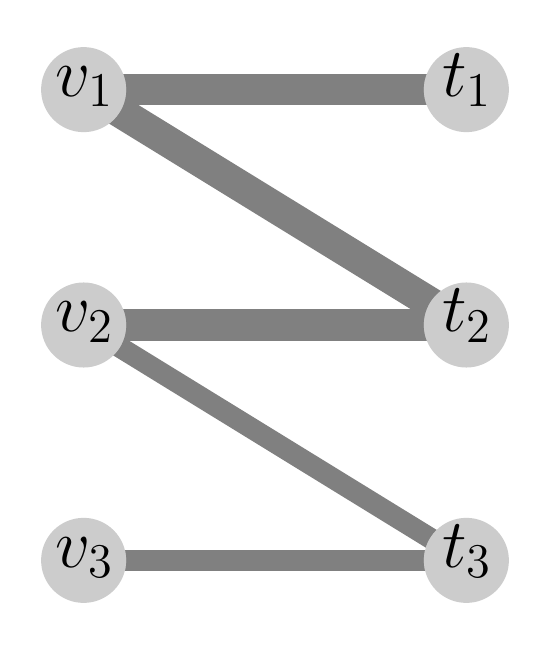}
		\subcaption{$\thetavec$ values}
		\label{fig:prob}
	\end{minipage}
	\begin{minipage}[t]{.2\textwidth}
		\includegraphics[width=1.0\textwidth]{./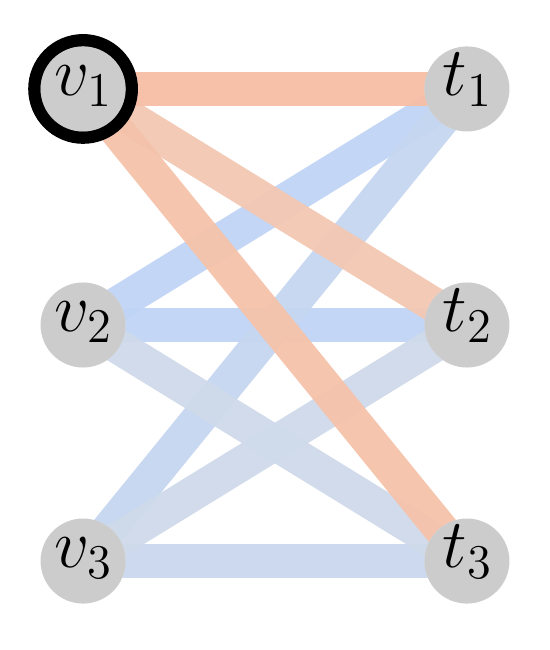}
		\subcaption{Result of $v_1$}
		\label{fig:grad1}
	\end{minipage}
	\begin{minipage}[t]{.2\textwidth}
		\includegraphics[width=1.0\textwidth]{./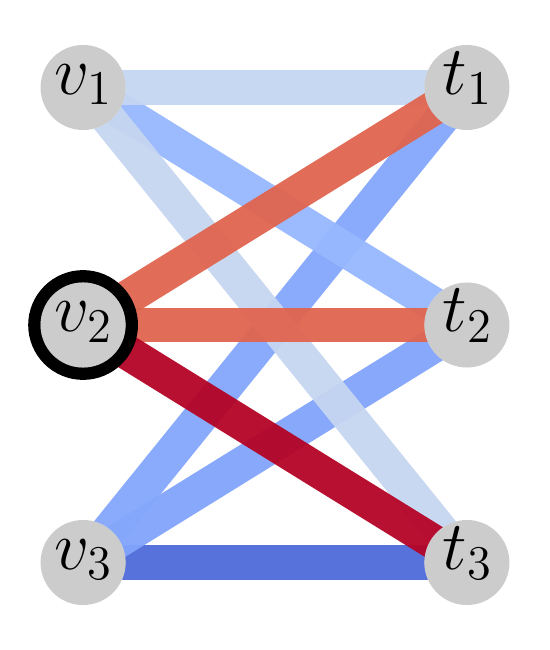}
		\subcaption{Result of $v_2$}
		\label{fig:grad2}
	\end{minipage}
	\begin{minipage}[t]{.2\textwidth}
		\includegraphics[width=1.0\textwidth]{./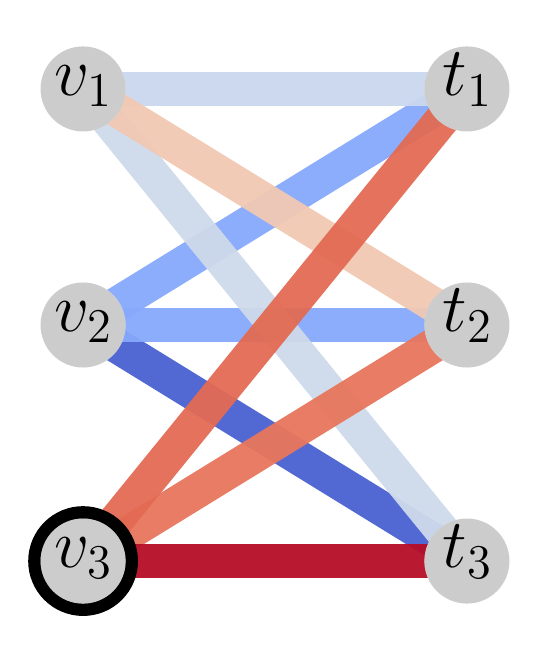}
		\subcaption{Result of $v_3$}
		\label{fig:grad3}
	\end{minipage}
	\begin{minipage}[t]{.08\textwidth}
		\includegraphics[width=1.0\textwidth]{./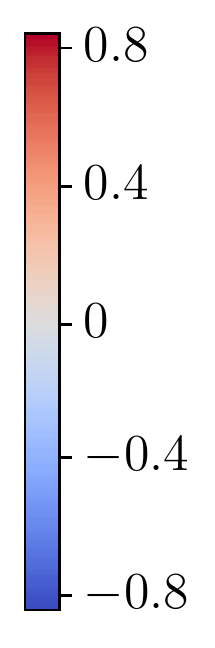}
		\subcaption*{}
		\label{fig:cbar}
	\end{minipage}
	\caption{(a): 
		Given $\thetavec$ values.  
		Thick and thin edges have link probabilities 
		$0.4$ and $0.2$, respectively. 
		(b) to (d): 
		Sensitivity analysis results.  
		Edge colors in (b), (c), and (d) indicate how the increase in the corresponding $\thetavec$ entries can affect the probability of choosing $v_1$, $v_2$, and $v_3$, respectively.  
	}
\end{figure}

\subsection{Sensitivity analysis}\label{subsec:exp_sensitivity}
We perform sensitivity analysis with a synthetic instance such that $V = \{v_1, v_2, v_3\}$,  $T = \{t_1, t_2, t_3\}$, 
and $K=2$. 
Let $\theta_{i,j}$ denote the link probability of $(v_i, t_j)$; 
we set 
$(\theta_{1,1}, \theta_{1,2}, \theta_{1,3}) = (0.4, 0.4, 0)$, 
$(\theta_{2,1}, \theta_{2,2}, \theta_{2,3}) = (0, 0.4, 0.2)$, and 
$(\theta_{3,1}, \theta_{3,2}, \theta_{3,3}) = (0, 0, 0.2)$ as in \Cref{fig:prob}. 
We analyze the sensitivity of {\sgreedy} 
by estimating $\nabla_\thetavec \E_{S\sim p(\thetavec)}[\ib_S]$ as explained in \Cref{subsec:sensitivity}.  
We let $N = 100$ and 
reduce the variance with 
the baseline correction method explained in \Cref{sec:gradient}.

Figures \ref{fig:grad1}, \ref{fig:grad2}, and \ref{fig:grad3} illustrate 
how and how much the increase in each $\theta_{i,j}$ value can affect the probability of choosing $ v_1$, $v_2$, and $ v_3$, respectively. 
In this setting, 
the objective values of 
the three maximal solutions, $\{v_1, v_2\}$, $\{v_1, v_3\}$, and $\{v_2, v_3\}$, 
are $1.24$, $1.00$, and $0.76$, respectively. 
Therefore, {\sgreedy} returns $\{v_1, v_2\}$ or $\{v_1, v_3\}$ with a high probability; 
this remains true even if the $\thetavec$ values slightly change. 
Thus, the probability of choosing $v_1$ is relatively insensitive 
as in \Cref{fig:grad1}. 
In contrast, 
as in \Cref{fig:grad2,fig:grad3}, 
the probabilities of choosing $v_2$ and $v_3$, 
respectively, are highly sensitive. 
For example, if $\theta_{2,3}$ increases, 
the probability that the algorithm returns $\{v_1, v_2\}$ ($\{v_1, v_3\}$) increases (decreases), 
which means the probability of choosing $v_2$ ($v_3$) is positively (negatively) affected by the increase in $\theta_{2,3}$. 
We can also see the that the opposite occurs if $\theta_{3,3}$ increases.

%        In [26]: set_func([0,1], P, w)                                                                                                                                     
%        Out[26]: 1.2400000095367432
%        
%        In [27]: set_func([0,2], P, w)                                                                                                                                     
%        Out[27]: 0.9999999403953552
%        
%        In [28]: set_func([1,2], P, w)                                                                                                                                     
%        Out[28]: 0.7599999308586121

\subsection{Decision-focused learning}\label{subsec:exp_decision}

\begin{table}
	\centering
	\caption{Function values achieved with each method.}
	\label{table:experiment_decision}
	{\small
		\begin{tabular}{rcccccc}
			\toprule
			& \multicolumn{2}{c}{$K=5$} & \multicolumn{2}{c}{$K=10$} & \multicolumn{2}{c}{$K=20$} \\
			\cmidrule(r){2-3} \cmidrule(r){4-5} \cmidrule(r){6-7} 
			& Training & Test & Training & Test & Training & Test \\
			\midrule
			SG-$1$      & $26.3 \pm 4.0$ & $26.4 \pm 4.4$ & $46.0 \pm  5.9$ & $45.9 \pm  6.5$ & $69.7 \pm 23.8$ & $69.6 \pm 24.1$ \\
			SG-$10$     & $29.0 \pm 3.7$ & $28.1 \pm 4.9$ & $47.0 \pm 12.1$ & $46.1 \pm 12.4$ & $71.5 \pm 28.0$ & $70.6 \pm 28.1$ \\
			SG-$100$    & $33.6 \pm 2.4$ & $32.0 \pm 3.8$ & $54.3 \pm  2.0$ & $53.5 \pm  4.2$ & $82.6 \pm 21.8$ & $82.3 \pm 21.7$ \\
%			VR-SG-$1$   & $28.4 \pm 0.8$ & $28.0 \pm 2.4$ & $50.3 \pm  1.5$ & $50.4 \pm  2.8$ & $86.7 \pm  1.4$ & $86.3 \pm  3.6$ \\
			VR-SG-$10$  & $35.2 \pm 6.1$ & $33.7 \pm 6.2$ & $57.9 \pm  1.6$ & $56.2 \pm  3.4$ & $90.8 \pm 16.5$ & $89.5 \pm 16.7$ \\
			VR-SG-$100$ & $\bf 36.8 \pm 0.9$ & $\bf 35.6 \pm 2.2$ & $\bf 59.9 \pm  1.6$ & $\bf 58.0 \pm  2.9$ & $\bf 96.8 \pm  1.1$ & $\bf 94.5 \pm  2.6$ \\
			Continuous         & $24.0 \pm 4.5$ & $23.2 \pm 4.9$ & $43.2 \pm  6.1$ & $42.3 \pm  7.1$ & $81.7 \pm  6.8$ & $81.3 \pm  6.6$ \\
			Two-stage   & $17.3 \pm 1.2$ & $17.3 \pm 2.1$ & $35.6 \pm  0.9$ & $35.6 \pm  2.7$ & $65.5 \pm  4.0$ & $64.8 \pm  5.1$ \\
			Random      & $17.5 \pm 1.0$ & $17.6 \pm 2.2$ & $33.8 \pm  0.8$ & $34.0 \pm  2.7$ & $64.0 \pm  1.3$ & $64.5 \pm  2.6$ \\
			\bottomrule
		\end{tabular}
	}
\end{table}

We evaluate the performance of our method via decision-focused learning experiments 
with MovieLens $100$K dataset \citep{harper2015movielens}, which contains $100,000$ ratings ($1$ to $5$) of $1,682$ movies made by  $943$ users. 
We set the link probabilities at $0.02, 0.04, \dots, 0.1$ according to the ratings; those of unrated ones are set at $0$. 
We randomly sample $100$ movies and $500$ users, which form item set $V$ and target set $T$, respectively.  
We thus make $100$ random $(V, T)$ pairs with link probabilities.  
Each movie $v\in V$ belongs to some of $19$ genres, e.g., action and horror; 
we use the $19$-dimensional indicator vector as a movie feature. 
Each user $t\in T$ has information of their age, sex, and occupation categorized into $21$ types, 
e.g., writer and doctor; 
we concatenate them and use the resulting $24$-dimensional vector as a user feature. 
A feature of each $(v,t)\in V \times T$ is a concatenation of the $19$- and $24$-dimensional vectors.  
As a result, each of the $100$ random $(V,T)$ pairs has feature $\Xb$ of form $100\times 500 \times 43$. 
The predictive model,
which outputs $\theta_{v,t}\in[0,1]$ for the feature of each $(v,t)\in V\times T$, 
is a $2$-layer NN with a hidden layer of size $200$ and ReLU activation functions; 
the outputs are clipped to $[0,1]$. 
Since the features are sparse, the predictive model with default weight initialization 
returns $0$ too frequently; 
to avoid this, 
we set initial linear-layer weights at random non-negative values drawn from $[0, 0.01]$. 

We split the $100$ random instances into $80$ training and $20$ test instances; 
we train the predictive model with $(\Xb_1, \thetavec_1), \dots, (\Xb_{80}, \thetavec_{80})$ 
and test the performance with $(\hat \Xb_1, \hat \thetavec_1), \dots, (\hat \Xb_{20}, \hat \thetavec_{20})$. 
We make $30$ random training/test splits, and  
we present all results with means and standard deviations over the $30$ random splits. 
Given $80$ training datasets, 
we train the model over mini-batches of size $20$ for $5$ epochs.
We use Adam with learning rate $10^{-3}$ for updating the model parameter, $\wvec$.\footnote{
	The settings mostly replicate those of budget allocation instances in \citep{wilder2019melding}, 
	but we use the public MovieLens dataset instead of the original one, which is not open to the public. Accordingly, some parts are slightly changed.}  

We compare 
{\bf SG-$N$}, 
{\bf VR-SG-$N$}, 
{\bf Continuous}, 
{\bf Two-stage}, and 
{\bf Random}. 
{\bf SG-$N$} is our method based on {\sgreedy} (see, \Cref{subsec:decision}), 
where $N$ indicates the number of output samples; we let $N=1$, $10$, and $100$. 
{\bf VR-SG-$N$} 
(variance-reduced {\bf SG-$N$}) uses the baseline correction method when estimating gradients; we let $N=10$ and $100$ (omit $N=1$) since if $N=1$, the baseline value is equal to the single output value, which always yields zero gradients. 
Both {\bf SG-$N$} and {\bf VR-SG-$N$} use the greedy algorithm when making decisions. 
{\bf Continuous} \citep{wilder2019melding} 
maximizes the continuous relaxation of the objective function with SGA 
and differentiates local optima;  
we use their original implementation. 
{\bf Two-stage} trains the model by minimizing the mean square error, and then maximizes the objective function with SGA;  
the implementation is based on that of \citep{wilder2019melding}.  
{\bf Continuous} and {\bf Two-stage} make decisions $S\in \Ical$ by choosing elements corresponding to 
the top-$K$ entries of solution $\xvec\in[0,1]^{n}$ returned by SGA. 
{\bf Random} is a baseline method that makes uniformly random decisions $S \in\Ical$.

\Cref{table:experiment_decision} shows the 
objective function values 
(averaged over the $80$ training and $20$ test instances) 
achieved by each method 
for $K=5$, $10$, and $20$. 
{\bf VR-SG-$100$} achieves the highest objective value for every case, 
and 
({\bf VR-}){\bf SG} with other settings also performs comparably to or better than {\bf Continuous}. 
These results are consistent with the theoretical guarantees. 
More precisely, 
while {\bf Continuous} trains the predictive model so that SGA, a $1/2$-approximation algorithm, returns high objective values, 
our methods train the model so that the (almost) ($1-1/\e$)-approximation 
(smoothed) greedy algorithm can achieve high objective values. 
We can also see that the variance reduction method is effective for improving the performance of our method. 
The standard deviation of ({\bf VR-}){\bf SG} becomes sometimes high; 
this is because they are sometimes trapped in poor local optima and result in highly deviated objective values. 
Considering this, the performance of our method would be further improved if we can combine it with NN training techniques for escaping from poor local optima. 
Regarding running times, 
for updating $\wvec$ once, {\bf SG}-$1$ takes $2.81$, $3.38$, and $3.77$ seconds on average for $K=5$, $10$, and $20$, respectively,  
while {\bf Continuous} takes $5.86$, $5.87$, and $6.11$ seconds, respectively.  
Hence, our methods can run faster by performing {\sgreedy} in parallel as mentioned in \Cref{subsec:decision}.

%time 
%K = 5   grd=2.8112858057022097 cnt=5.861440670490265
%K = 10 grd=3.385388708114624 cnt=5.87943457365036
%K = 20 grd=3.778781473636627 cnt=6.112937152385712

\section*{Broader Impact}
The greedy algorithm for submodular maximization is one of the most extensively studied subjects of combinatorial optimization in the machine learning (ML) community. On the other hand, many recent advances in ML methods are based on continuous optimization; particularly, NNs are usually trained with stochastic first-order methods. Our work, which serves as a bridge between the combinatorial greedy algorithm and continuous first-order methods, will benefit researchers in the optimization and ML communities and practitioners who have ML tasks related to submodular maximization. 
Below we present examples of practical situations where our framework is useful. 
 
\begin{itemize}
	\item 
	Submodular maximization sometimes appears when making vital decisions; e.g., allocation of large resources to advertising channels. In such situations, after computing a solution with the greedy algorithm, we can use our sensitivity analysis method (see, \Cref{subsec:sensitivity,subsec:exp_sensitivity}) for assessing the reliability of the solution, i.e., how robust it is against uncertainties in parameters of objective functions. If the entries of the estimated Jacobian corresponding to the solution are small enough in absolute value, then the solution is reliable and thus we can put it into practice; otherwise, we can try using robust submodular maximization methods (e.g., \citep{staib2019distributionally}) to strike a balance between the objective value and reliability. 

	\item
	ML tasks related to submodular maximization often involve prediction of parameters; for example, when designing diverse-recommendation systems, we need to predict users' preferences, which correspond to the parameters. In such situations, the decision-focused learning method based on our framework (\Cref{subsec:decision,subsec:exp_decision}) is useful, particularly when we do not have enough prior knowledge on how to design good predictive models and loss functions. Note that the simplicity of our method, which does not use the multilinear extension unlike \citep{wilder2019end}, is also beneficial to practitioners. 	  
	
\end{itemize}

As a negative aspect, failures of systems that utilize our method may result in harmful consequences. In particular, when our method is combined with NNs, how to avoid poor local optima is a practically important issue as mentioned in \Cref{subsec:exp_decision}. 
To resolve this, we need to study the structures of objective functions that appear in each situation in detail, which we leave for future work.

\bibliographystyle{plainnat}
\bibliography{mybib}

\clearpage

\appendix
\setcounter{equation}{0}
\renewcommand{\theequation}{A\arabic{equation}}

\clearpage

%\onecolumn
\begin{center}
	{\fontsize{18pt}{0pt}\selectfont \bf Appendix}
\end{center}

\section{Comparisons with existing greedy methods}\label{sec:comparison}
We present detailed comparisons of 
our work and 
the exiting studies \citep{tschiatschek2018differentiable,powers2018differentiable} on the differentiable greedy methods, 
which use $\softmax$ instead of $\argmax$.  
As explained below, 
the existing methods are devoted to differentiating some functions defined with subsets $X_1, X_2,\ldots\subseteq V$ given as training data. 
In contrast, 
we do not assume such subsets to be given and consider differentiating the expected value of any output-dependent quantities, $\E_{S\sim p(\thetavec)}[Q(S)]$; 
note that this design of our framework is the key to dealing with sensitivity analysis and decision-focused learning.  
Our framework can also provide more reasonable approaches to their problem settings as described below.

\citet{tschiatschek2018differentiable} consider differentiating the likelihood function, 
which quantifies how close an output of their algorithm can be to some good solutions, $X_1, X_2,\dots$, given as training data. 
To this end, we need to differentiate $P(X) \coloneqq \Sigma_{\sigma \in \Sigma(X)} P(\sigma, \thetavec)$, 
where $X\in \{X_1,X_2,\dots\}$ is a given subset, $\Sigma(X)$ is the set of all permutations of elements in $X$, and $P(\sigma, \thetavec)$ is the probability that their algorithm returns sequence $\sigma\in\Sscr_{\le K}$.  
Since the computation of the summation over $\Sigma(X)$ is too costly, they employ the following heuristic approximation: 
if the temperature of $\softmax$ is low, 
we let $P(X)\approx P(\sigma^G,\thetavec)$, where $\sigma^G$ is obtained by the greedy algorithm, 
and if the temperature is high, 
we let $P(X) \approx |X|! \times P(\sigma^R, \thetavec)$, 
where $\sigma^R$ is a random permutation. 
As a result, the computed derivative has no theoretical guarantees 
unlike our gradient estimator, which is guaranteed to be unbiased.   
Note that with our method, 
we can compute an unbiased estimator of the desired derivative as follows: 
we let $Q(S)$ return $1$ if $S$ and $X$ consist of the same elements and $0$ otherwise, and we estimate $
\nabla_\thetavec P(X) = \nabla_\thetavec \E_{S \sim p(\thetavec)} [Q(S)]$ 
as explained in \Cref{sec:gradient}.   

\citet{powers2018differentiable} focus on some cases where 
we can compute derivatives more easily. 
They consider some loss function $L(X, \pvec_1(\thetavec), \dots, \pvec_K(\thetavec))$ 
that is differentiable w.r.t. $\pvec_1(\thetavec), \dots, \pvec_K(\thetavec)$, 
where $X$ is given as training data. 
In their setting, $\pvec_i(\thetavec)$ is given by $\softmax$ and 
$\f(X, \thetavec)$ is differentiable w.r.t. $\thetavec$. 
Thus,  once $X$ is fixed, 
$\nabla_\thetavec L(X, \pvec_1(\thetavec), \dots, \pvec_K(\thetavec))$ 
can be readily computed via automatic differentiation.  
%This setting is completely different from ours; 
%for example, they do not take the randomness of outputs, which are distributed over $\Sscr_{\le K}$, into account.  
%
From the perspective of our method, we can regard their method as the one dealing with the case of  $N=1$. 
More precisely, if we take $X$ to be a single output of {\sgreedy} 
and 
let 
$L(X, \pvec_1(\thetavec), \dots, \pvec_K(\thetavec)) = Q(X)\ln p(X,\thetavec)$, 
then 
the derivative computed by their method coincides with the one obtained by using our method with $N=1$. 
Note that 
the above design of $L(\cdot)$, which is the key to obtaining unbiased gradient estimators, 
and the case of $N>1$ are not studied in \citep{powers2018differentiable}.  
Furthermore, if we apply our framework to their problem setting, we can use (non-differentiable) loss functions,  $L(X, S)$, 
that measure the distance between given $X$ and output $S$ (e.g., Hamming and Levenshtein distances); 
we let $Q(S)=L(X, S)$ and estimate $\nabla_\thetavec\E_{S\sim p(\thetavec)}[L(X, S)]$.

\section{Proofs of approximation guarantees}\label{asec:proofs}
In the following discussion, $S_k$ denotes the solution constructed in the $k$-th step of  
\Cref{alg:pgreedy}; 
we let $S_0=\emptyset$. 
For simplicity, we omit the fixed parameter, $\thetavec$, in the proofs. 
%The following discussion holds as long as $\f(\cdot, \thetavec)$ is normalized, monotone, and submodular. 

\marginal*

\begin{proof}
	From the rule of choosing $s_k$, we have $\E[\fdel{s_k}{S}] = \iprod{\gvec_k , \pvec_k}$. 
	Let $\ib_u\in\R^\nk$ be the indicator vector of $u\in \Uk$. 
	Since $\iprod{\gvec_k, \ib_u} = \fdel{u}{S}$ and $\ib_u\in \Delta^\nk$ hold, 
	we can obtain the lemma as follows:  
	\begin{align}
	\E[\fdel{s_k}{S}] = \iprod{\gvec_k , \pvec_k}    
	&= 
	\max_{\pvec\in\Delta^\nk} \{\iprod{\gvec_k, \pvec} - \Omega_k(\pvec)\} + \Omega_k(\pvec_k) \\
	&\ge
	\iprod{\gvec_k, \ib_u}  - (\Omega_k(\ib_u) - \Omega_k(\pvec_k))
	\ge
	\fdel{u}{S} - \delta, 
	\end{align}
	where the last inequality comes from 
	$\delta\ge \Omega_k(\pvec) - \Omega_k(\qvec)$ 
	for any $\pvec, \qvec\in \Delta^{\nk}$.
\end{proof}

\cardinality*

\begin{proof}
	Fix $k \in \{ 1,\dots, K \}$ arbitrarily 
	and take all random quantities to be 
	conditioned on the ($k-1$)-th step.
	From \Cref{lem:fdel} with $O\backslash\Sk{k-1} \subseteq \Uk$ 
	and the submodularity, we obtain
	\[
	\E[\fdel{s_k}{\Sk{k-1}}] \ge \frac{1}{K} \sum_{v\in O\backslash S_{k-1}} \fdel{v}{\Sk{k-1}} -\delta 
	\ge \frac{1}{K} \fdel{O}{\Sk{k-1}} - \delta.   
	\] 
	By taking expectation over all possible realizations of the ($k-1$)-th step and 
	using the monotonicity, we obtain
	\[
	\E[\f(\Sk{k})] - \E[\f(\Sk{k-1})]    
	\ge \frac{1}{K} (\E[\f(O \cup \Sk{k-1})] - \E[\f(\Sk{k-1})]) - \delta
	\ge \frac{1}{K} (\f(O) - \E[\f(\Sk{k-1})]) - \delta.
	\] 
	Therefore, as is often the case with the analysis of the greedy algorithm, 
	we can obtain the following inequality by induction:   
	\[ 
	\E[\f(\Sk{K})] \ge \left( 1 - \left(1 - \frac{1}{K}\right)^K \right) \f(O) 
	- \delta\sum_{k=0}^{K-1}\left(1 - \frac{1}{K}\right)^k
	\ge \left(1 - \frac{1}{\e}\right) \f(O) - \delta K,  
	\]
	where we used $\f(\emptyset)=0$. 
	Hence we obtain the theorem from $\E[\f(S)] = \E[\f(S_K)]$. 
\end{proof}

\extensible*

\newcommand{\Kp}{{|S|}}
\begin{proof}
	For each realization of $S_0 \subset S_1 \subset \dots \subset S_\Kp  = S \in \Ical$, 
	we define $S_{\Kp +1}, S_{\Kp +2} , \dots, S_{K}$ as $S$. 
	We thus construct a series of feasible solutions, $S_0, S_1, \dots, S_K$, 
	for every realization.  
	Note that $\E[\f(S)] = \E[\f(S_K)]$ holds since we always have $S = S_K$.

	We consider constructing a series of subsets $O_0,O_1,\dots,O_K$ 
	for each realization of $S_0, S_1, \dots, S_K$. 
	We aim to prove that we can construct such $O_0, O_1, \dots,O_K$ satisfying the following conditions: 
	$O_0 = O$, 
	$S_i \subseteq O_i \in\Ical$ ($i=0,\dots, K-1$), 
	$S_K = O_K \in \Ical$ for every realization, 
	and 
	\begin{align}\label{eq:telescope}
	\kappa \cdot (\E[\f(S_{i})] - \E[\f(S_{i-1})] + \delta) \ge \E[\f(O_{i-1})] - \E[\f(O_i)]
	\end{align}    
	for $i=1,\dots,K$. 
	
	In the case of $i=0$, we let $O_0 = O$, 
	which satisfies $S_0 = \emptyset \subseteq O = O_0 \in\Ical$. 
	In this case, \eqref{eq:telescope} is not required to hold. 
	
	We assume all random quantities to be conditioned on an arbitrary realization of the ($k-1$)-th step, 
	where $S_0,\dots, S_{k-1}$ and $O_0,\dots,O_{k-1}$ satisfying 
	$S_i\subseteq O_i \in \Ical$ ($i=0,\dots, k-1$) are given. 
	If $S_{k-1}$ is maximal, we let $O_k = S_k$ ($= S_{k-1} = O_{k-1}$), 
	which satisfies $S_k = O_k \in \Ical$ and 
	\[
	\kappa \cdot (\E[\f(S_k)] - \f(S_{k-1}) + \delta) = \kappa \cdot \delta \ge 0 = \f(O_{k-1}) - \E[\f(O_k)]. 
	\]
	If $S_{k-1}$ is not maximal, from the definition of $\kappa$-extensible systems, 
	for any choice of $s_k\notin S_{k-1}$, there exists $Z_k\subseteq O_{k-1}\backslash S_{k-1}$ such that 
	$O_{k-1}\backslash Z_k \cup \{ s_k \}  \in\Ical$ and $|Z_k| \le \kappa$ hold.  
	We let $O_{k} = O_{k-1}\backslash Z_k \cup \{ s_k \}$. 
	Note that thus constructed $O_k$ satisfies
	$S_k \subseteq O_k \in \Ical$ for any realization of the $k$-th step; 
	moreover, if $k =K$, we always have $S_K = O_K \in\Ical$ since $S_K$ is maximal in any realization. 
	Considering expectation over realizations of the $k$-th step, we obtain 
	\begin{align}
	&\f(O_{k-1}) - \E[\f(O_k)] \\
	={}&
	\f(O_{k-1}) - \E[\f(O_{k-1} \backslash Z_k)] \\
	&\qquad \quad \ \ \,  + \E[\f(O_{k-1} \backslash Z_k)] - \E[\f(O_k)] \\    
	\le{}&
	\E[\fdel{Z_k}{O_{k-1} \backslash Z_k}] 
	&\because\text{$O_{k-1} \backslash Z_k \subseteq O_k$ and monotonicity} \\
	\le{}&
	\E\left[\sum_{v\in Z_k} \fdel{v}{S_{k-1}}\right]
	&\because\text{$S_{k-1}\subseteq O_{k-1}\backslash Z_k$ and submodularity} \\
	\le{}& 
	\kappa \cdot (\E[\fdel{s_k}{S_{k-1}}] + \delta)   
	&\because\text{$Z_k \subseteq O_{k-1}\backslash S_{k-1} \subseteq \Uk$, \Cref{lem:fdel}, and $|Z_k|\le \kappa$} \\
	={}&
	\kappa \cdot (\E[\f(S_k)] - \f(S_{k-1}) + \delta)
	\end{align}
	Therefore, in any case we have 
	\[
	\kappa \cdot (\E[\f(S_k)] - \f(S_{k-1}) + \delta) \ge \f(O_{k-1}) - \E[\f(O_k)].  
	\]
	By taking expectation over all realizations of the ($k-1$)-th step, we obtain \eqref{eq:telescope} for $i=k$. 
	For every realization, 
	$O_0,\dots, O_k$ constructed above satisfy 
	$S_i \subseteq O_i \in \Ical $ for $i = 0,\dots, k$ 
	(if $k = K$, we have $S_k = O_k \in \Ical$). 
	This means that the assumption of induction for the next step is satisfied.  	
	Consequently, \eqref{eq:telescope} holds for $i=1,\dots, K$ by induction.  
	Summing both sides of \eqref{eq:telescope} for $i=1,\dots, K$, 
	we obtain 
	\[
	\kappa \cdot (\E[\f(S_K)] - \f(\emptyset) + \delta K ) \ge  \E[\f(O_0)] - \E[\f(O_K)].
	\]
	Since we have $\f(\emptyset)=0$, $O_0 = O$, and $O_K = S_K$ for every realization, it holds that
	\[
	\E[\f(S_K)] \ge \frac{1}{\kappa+1} \f(O) - \frac{\kappa}{\kappa+1} \delta K 
	\ge \frac{1}{\kappa+1} \f(O) -\delta K.
	\]    
	Hence we obtain the theorem from $\E[\f(S)] = \E[\f(S_K)]$. 
\end{proof}

\section{Regularization functions}\label{sec:regularization}
We first detail the case where $\Omega_k$ is the entropy function. 
We then present a sufficient condition for satisfying \Cref{assump:diff_p}, 
which is useful when designing regularization functions. 

\subsection{Entropy regularization}\label{subsec:entropy}
We consider using the entropy function as a regularization function: 
$
\Omega_k (\pvec) = \epsilon \sum_{u \in \Uk} p(u) \ln  p(u)
$, 
where $\epsilon>0$ is a constant that controls the perturbation strength. 
Note that we have
$\Omega_k(\pvec) - \Omega_k(\qvec) \le \epsilon\cdot0 - \epsilon\sum_{i=1}^\nk \frac{1}{\nk}\ln\frac{1}{\nk} = \epsilon\ln\nk$ 
for any $\pvec, \qvec\in \Delta^\nk$.

From the relationship between the entropy regularization and $\softmax$, 
each iteration of {\sgreedy} can be performed via softmax sampling. 
More precisely, from the Karush--Kuhn--Tucker (KKT) condition of problem \eqref{eq:solveomega}, 
$\max_{\pvec\in\Delta^{\nk}}\{\iprod{\gveck{k}, \pvec} - \Omega_k(\pvec)   \}$, we have 
\begin{align}\label{eq:kkt}
\epsilon
(\ln \pvec + \ib_\nk)
- \gvec_k + \ib_\nk \mu = \zeros_\nk 
\quad \text{and} \quad 
\ib_\nk^\top \pvec  = 1,  
\end{align}
where $\ln$ operates in an element-wise manner 
and $\mu\in\R$ is a multiplier corresponding to the equality constraint. 
Note that we need not take the inequality constraints, $\pvec\ge\zeros_\nk$, into account since the entropy regularization 
forces every $p(u)$ to be positive. 
Since $\Omega_k$ is strictly convex and every feasible solution satisfies the linear independence constraint qualification (LICQ), 
the maximizer, $\pvec_k$, is characterized as the unique solution to the KKT equation system \eqref{eq:kkt}.  
From \eqref{eq:kkt}, we see that $\pvec_k$ is proportional to $\exp(\gvec_k/ \epsilon)$.  
Thus, Steps \ref{step:g} to \ref{step:choose} in \Cref{alg:pgreedy} can be 
performed via softmax sampling: $p_k(u, \thetavec) \propto \exp(\fdelp{u}{\Sk{k-1}}{\thetavec} / \epsilon)$ for $u\in \Uk$, 
which takes $\mathrm{O}(\nk)$ time if $\fdelp{u}{\Sk{k-1}}{\thetavec}$ values are given.

We then discuss how to compute $\nabla_{\gvec_k} \pveck{k}(\gvec_k)$.  
While this can be done by directly differentiating $p_k(u, \gvec_k) \propto \exp(g_k(u) / \epsilon)$, 
we here see how to compute it by applying the implicit function theorem 
(see, e.g., \citep{dontchev2014implicit}) 
to the KKT equation system \eqref{eq:kkt} 
as a warm-up for the next section.
In this case, the requirements for using the implicit function theorem 
are satisfied (see the next section). 
By differentiating the KKT equation system \eqref{eq:kkt} w.r.t. $\gvec_k$, we obtain 
\begin{align}
\begin{bmatrix}
\epsilon \diag(\pvec_k)^{-1} & \ib_\nk \\
\ib_\nk^\top & 0
\end{bmatrix} 
\begin{bmatrix}
\nabla_{\gvec_k} \pvec_k \\
\nabla_{\gvec_k} \mu
\end{bmatrix} 
= 
\begin{bmatrix}
\Ib_\nk \\ 
\zeros_\nk^\top
\end{bmatrix},
\end{align}
where $\diag(\pvec_k)$ is a diagonal matrix whose diagonal entries are $\pvec_k$ 
and $\Ib_\nk$ is the $\nk\times \nk$ identity matrix. 
We can compute $\nabla_{\gvec_k} \pveck{k}$ by solving the above equation as follows: 
\begin{align}
\begin{bmatrix}
\nabla_{\gvec_k} \pvec_k \\
\nabla_{\gvec_k} \mu
\end{bmatrix} 
= 
\begin{bmatrix}
\epsilon \diag(\pvec_k)^{-1} & \ib_\nk \\
\ib_\nk^\top & 0
\end{bmatrix}^{-1} 
\begin{bmatrix}
\Ib_\nk \\ 
\zeros_\nk^\top
\end{bmatrix}
=
\begin{bmatrix}
\epsilon^{-1} (\diag(\pvec_k) - \pvec_k \pvec_k^\top) \\
\pvec_k^\top 
\end{bmatrix}. 
\end{align}
Note that once we obtain
%$\gvec_k(\thetavec)$ and 
$\nabla_\thetavec \gvec_k(\thetavec)$, 
we can compute the desired derivative, 
$
\nabla_\thetavec \pveck{k}(\thetavec)
=
\nabla_{\gvec_k} \pveck{k}(\gvec_k) \cdot \nabla_\thetavec \gvec_k(\thetavec)
=
\epsilon^{-1} (\diag(\pvec_k) - \pvec_k \pvec_k^\top) \nabla_\thetavec \gvec_k(\thetavec)
$, 
by matrix-vector products in $\mathrm{O}(\nk \times \dim\Theta)$ time. 

One may get interested in the link between 
problem \eqref{eq:solveomega} with the entropy regularization and 
the optimal transport (OT) with entropy regularization \citep{cuturi2013sinkhorn}. 
Specifically, while \eqref{eq:solveomega} has a vector variable with one equality constraint, OT has a matrix variable with two equality constraints; in this sense, \eqref{eq:solveomega} considers a simpler setting. 
Thanks to the simplicity, we can analyze the theoretical guarantees of {\sgreedy}.  
In contrast, if we consider using OT, we can employ more sophisticated operations, 
e.g., ranking and sorting \citep{cuturi2019differentiable}, 
than $\argmax$. 
In return for this, however, it becomes more difficult to prove approximation guarantees; for example, how to design transportation costs is non-trivial. 
This OT-based approach to designing differentiable combinatorial optimization algorithms will be an interesting research direction, which we leave for future work.

\subsection{Sufficient condition for satisfying \Cref{assump:diff_p}}\label{subsec:perturb_assump}\label{subsec:sufficient}
We study the case where $\Omega_k$ is a general strictly convex differentiable function; 
although a similar discussion is presented in \citep{amos2017optnet} 
for the case where $\Omega_k$ is quadratic, we here provide a detailed analysis with general $\Omega_k$ 
for completeness. 
The KKT condition 
of problem \eqref{eq:solveomega} can be written as 
\begin{align}
\nabla_\pvec \Omega_k(\pvec)  - \gvec_k - \lambdavec + \ib_\nk \mu 
= \zeros_\nk, & & 
\lambdavec \odot \pvec 
= \zeros_\nk, & & 
\text{and} & &
\ib_\nk^\top \pvec 
= 1,
\end{align}
where 
$\lambdavec\ge \zeros_\nk$ 
consists of multipliers corresponding to the inequality constraints, 
$\pvec\ge\zeros_\nk$, 
and 
$\odot$ denotes the element-wise product. 
Since every feasible point in $\Delta^\nk$ satisfies LICQ, 
if
$\Omega_k$ is strictly convex on $\Delta^\nk$, 
the optimal solution is uniquely characterized by 
the KKT condition. 
Let $(\tilde\pvec, \tilde\lambdavec, \tilde\mu)$ be 
a triplet that satisfies the KKT condition, 
where $\tilde{\pvec} = \pvec_k$.  
If the following three conditions hold, 
$\nabla_{\gvec_k} \pvec_k$ can be calculated from the KKT condition 
as detailed later: 
%%
%$\Omega_k$ is twice-differentiable, 
%the Hessian, 
%$\nabla_\pvec^2 \Omega_k(\tilde\pvec)$, 
%is positive definite, 
%and 
%the strict complementarity, $\tilde\lambdavec + \tilde\pvec > \zeros_\nk$, 
%holds, 
%%
%Summarizing the above, 
%a sufficient condition for 
%satisfying \Cref{assump:diff_p} can be written as follows: 
\begin{enumerate}
	\item $\Omega_k$ is twice-differentiable, 
	\item the Hessian, $\nabla_\pvec^2\Omega_k(\pvec)$, is positive definite for any $\pvec\in\Delta^\nk$, and 
	\item the strict complementarity, 
	$\tilde\lambdavec + \tilde\pvec > \zeros_\nk$, 
	holds at the unique optimum, $\tilde\pvec = \pvec_k$.      
\end{enumerate}
Note that the second condition implies the strict convexity of $\Omega_k$ 
on $\Delta^{\nk}$. 
Therefore, a sufficient condition for satisfying \Cref{assump:diff_p} 
is given by the above three conditions. 
In practice, given any twice-differentiable convex function, 
we can add to it the entropy function multiplied by a small constant 
for obtaining $\Omega_k$ that satisfies the sufficient condition.

\newcommand{\U}{{\tilde U}}
We then explain how to compute $\nabla_{\gvec_k}\pvec_k$. 
Let $\U$ be a subset of $\Uk$ such that $\tilde{p}(u) = 0$ iff $u \in\U$; 
the strict complementarity implies $\tilde{\lambda}(u) = 0$ iff $u \notin \U$.  
We let $\xvec \coloneqq (\pvec, \lambdavec_\U, \mu)$, 
where 
$\lambdavec_\U$ is a $|\U|$-dimensional vector consisting of 
the entries of $\lambdavec$ corresponding to $\U$.
We define $\Ib_\U$ as the $\nk\times|\U|$ matrix that has columns of $\Ib_\nk$ corresponding to $\U$. 
The KKT equation system 
at $\tilde{\xvec} = (\tilde\pvec, \tilde\lambdavec_\U, \tilde\mu)$ can be 
written as 
\begin{align}
H(\xvec, \gvec_k) \coloneqq 
\begin{bmatrix}
\nabla_\pvec \Omega_k(\pvec)  - \gvec_k - \Ib_\U  \lambdavec_\U + \ib_\nk \mu 
\\
-\Ib_\U^\top \pvec 
\\
\ib_\nk^\top \pvec - 1
\end{bmatrix}
=
\begin{bmatrix}
\zeros_\nk \\ \zeros_{|\U|} \\ 0
\end{bmatrix},
\end{align}
and its partial Jacobians at $\tilde{\xvec}$ are given by
\begin{align}
\nabla_\xvec H(\tilde{\xvec}, \gvec_k) =
\begin{bmatrix}
\nabla_\pvec^2 \Omega_k(\tilde{\pvec}) & - \Ib_\U & \ib_\nk 
\\
- \Ib_\U^\top & \multicolumn{2}{c}{\multirow{2}{*}{ $\zeros_{|\U| + 1 \times |\U| + 1}$ }}
\\
\ib_\nk^\top &
\end{bmatrix}
& & \text{and} 
& & 
\nabla_{\gvec_k} H(\tilde{\xvec}, \gvec_k) 
= 
\begin{bmatrix}
-\Ib_\nk \\ \zeros_{|\U| \times \nk} \\ \zeros_\nk^\top
\end{bmatrix}.
\end{align}
Note that $|\U|<\nk$ always holds; 
otherwise $\tilde{\pvec} = \zeros_\nk$, which is an infeasible solution. 
Therefore, $[-\Ib_\U \ \ib_\nk]$ always has rank $|\U| + 1$. 
From the positive definiteness of $\nabla_\pvec^2 \Omega_k(\tilde{\pvec})$, we have 
\[
\det\left(
\nabla_\xvec H(\tilde{\xvec}, \gvec_k)
\right)
=
\det\left(
\nabla_\pvec^2 \Omega_k(\tilde{\pvec})
\right)
\det\left(
- [-\Ib_\U \ \ib_\nk]^\top  \nabla_\pvec^2 \Omega_k(\tilde{\pvec})^{-1} [-\Ib_\U \ \ib_\nk]
\right) 
\neq0, 
\]
where we used the Schur complement. 
Hence $\nabla_\xvec H(\tilde{\xvec}, \gvec_k)$ is non-singular.  
This guarantees that $\nabla_{\gvec_k} \pvec_k$ can be computed by using the implicit function theorem as follows (see, e.g., \citep{dontchev2014implicit}): 
\begin{align}
\begin{bmatrix}
\nabla_{\gvec_k} \pvec_k 
\\
\nabla_{\gvec_k} \tilde{\lambdavec}_\U 
\\
\nabla_{\gvec_k} \tilde{\mu}
\end{bmatrix}
=
- \nabla_\xvec H(\tilde{\xvec}, \gvec_k)^{-1} \nabla_{\gvec_k} H(\tilde{\xvec}, \gvec_k). 
\end{align}
Thus, once the KKT triplet, the Hessian, and $\nabla_\thetavec \gvec_k(\thetavec)$ are obtained, we can compute 
$\nabla_\thetavec \pveck{k}(\thetavec)
=
\nabla_{\gvec_k} \pveck{k}(\gvec_k) \cdot \nabla_\thetavec \gvec_k(\thetavec)$ 
in $\mathrm{O}(\nk^3 + \nk^2\times \dim\Theta)$ time in general. 
For speeding up this step, we can reduce the $\nk$ value by using the stochastic version of the greedy algorithm \citep{mirzasoleiman2015lazier} (see, \Cref{subsec:stoc_gradient}). 

A recent result \citep{stechlinski2018generalized} 
provides 
an extended version of the implicit function theorem, 
which may enable us to deal with a wider class of $\Omega_k$; 
we leave this for future work.

\section{Discussion on other gradient estimators}\label{sec:other_grad}
The score-function gradient estimator is one of 
major Monte Carlo gradient estimators.  
Other than that, 
the pathwise and measure-valued gradient estimators are 
widely used (see, \citep{mohamed2019monte} for a survey). 
The Gumbel-Softmax estimator \citep{jang2017categorical,maddison2017concrete} has also been used 
in many recent studies. 
We discuss why it is difficult to use those estimators for our case. 

The pathwise gradient estimators basically use derivatives of quantities 
inside the expectation. 
In our case, however, we cannot differentiate the quantity, $Q(S)$, w.r.t. $S$ 
since the domain is non-continuous.

The measure-valued gradient estimators require us to 
decompose $\nabla_\thetavec p(\thetavec) $ 
into $p^+(\thetavec)$ and $p^-(\thetavec)$, 
which must satisfy the following conditions: 
both $p^+(\thetavec)$ and $p^-(\thetavec)$ 
form some probability distribution functions, 
and 
$\nabla_\thetavec p(\thetavec) 
= 
c_\thetavec (p^+(\thetavec) - p^-(\thetavec))$ holds with some constant $c_\thetavec$. 
Once we obtain a decomposition satisfying these conditions, 
we can estimate the gradient by sampling from $p^+(\thetavec)$ and $p^-(\thetavec)$. 
It is known that we can obtain such a decomposition 
when $p(\thetavec)$ has certain structures, e.g., 
Poisson and Gaussian. 
In our case, however, $p(\thetavec)$ is the output distribution, and how to decompose it is non-trivial; in fact, this seems to be very difficult. 

The Gumbel-Softmax estimator is obtained by continuously interpolating discrete categorical distributions (defined on $\Delta^{\nk}$ in our case) and computing derivatives at interior points. 
In our case, however, 
we must obtain an extreme point, $\Sk{k-1}$, in the ($k-1$)-th step 
to compute the categorical distribution $\pvec_k(\thetavec)$ used in the 
$k$-th step. That is, unlike the cases of \citep{jang2017categorical,maddison2017concrete}, 
{\sgreedy} 
sequentially samples from categorical distributions that depend on the past samples. 
Consequently, 
the continuous interpolation for a single step does not 
work for smoothing the sequential $\argmax$; hence we cannot apply the Gumbel-Softmax estimator to our setting.

\section{Learning submodular models with limited oracle queries}\label{sec:other_application}

We discuss the application of our framework to learning of parameterized submodular functions with limited oracle queries.   
We also provide experiments on learning deep submodular functions \citep{dolhansky2016deep}.

\subsection{Problem description}\label{subsec:blackbox}
We consider maximizing unknown submodular function $\hat \f(\cdot)$ by sequentially querying its values. 
Specifically, in each $t$-th round, 
we can query $\hat \f(\cdot)$ values at $N$ points $S_1,\dots, S_N\in \Ical$, 
and by using this feedback, 
we seek a good solution for maximizing $\hat \f(\cdot)$.  
We suppose that no prior knowledge on the true function, $\hat \f(\cdot)$, other than the fact that it is normalized, monotone, and submodular, is available and 
that to query the true function value is costly and time-consuming. 
We want to achieve high $\hat\f(\cdot)$ values with a small number of rounds and queries.  
One can think of this setting as a variant of 
submodular maximization with low adaptive complexities \citep{balkanski2018adaptive} 
or 
online submodular maximization with bandit feedback \citep{zhang2019online}. 

We consider the following approach: 
we construct some parameterized submodular model 
$\f(\cdot, \thetavec)$, e.g., a deep submodular function, 
and update $\thetavec$ by using our gradient estimators 
with $Q(S_j) = \hat{\f}(S_j)$. 
That is, 
akin to the decision-focused approach described in \Cref{subsec:decision}, 
we train $\f(\cdot, \thetavec)$ so that the greedy algorithm can achieve high $\hat \f(\cdot)$ values; 
the current setting is more difficult since we 
know nothing about $\hat \f(\cdot)$ in advance and 
features, which are used by the predictive models, are unavailable. 

\subsection{Experiments}\label{subsec:exp_blackbox}

\begin{figure}[tb]
	\centering
	\begin{minipage}[t]{.4\textwidth}
		\includegraphics[width=1.0\textwidth]{./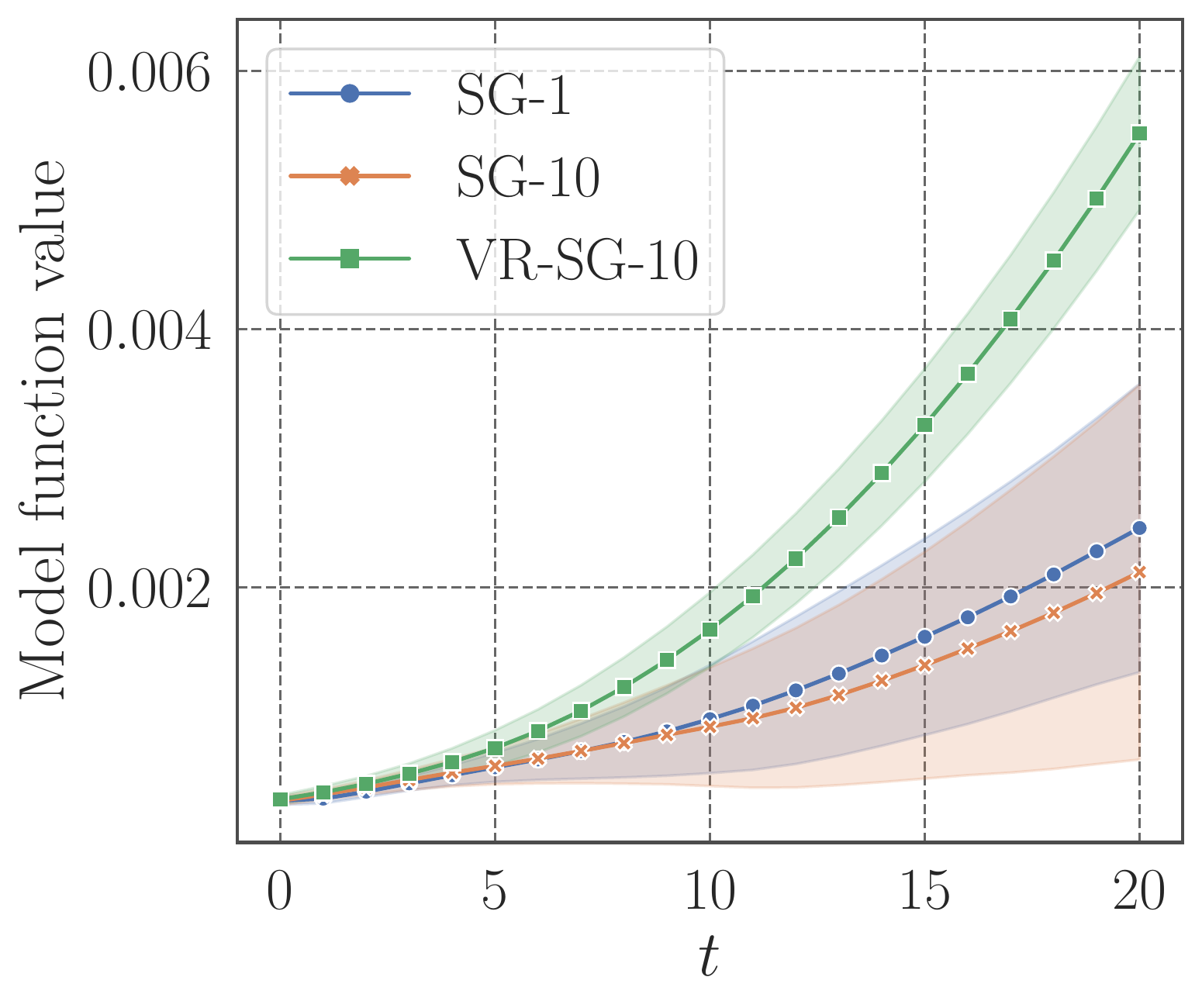}
		\subcaption{Noise-free}
		\label{fig:model}
	\end{minipage}
	\begin{minipage}[t]{.4\textwidth}
		\includegraphics[width=1.0\textwidth]{./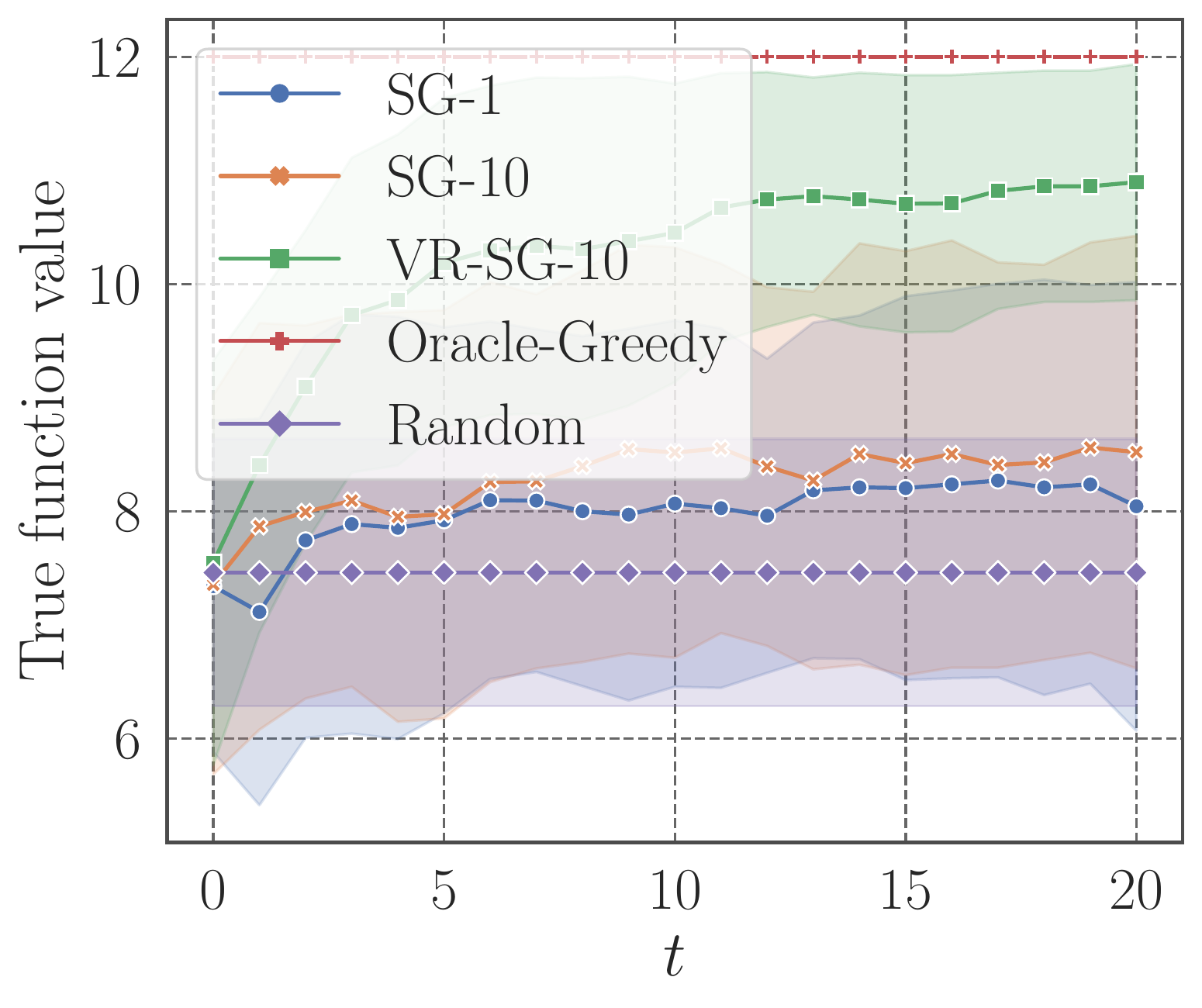}
		\subcaption{Noise-free}
		\label{fig:true}
	\end{minipage}
	\begin{minipage}[t]{.4\textwidth}
		\includegraphics[width=1.0\textwidth]{./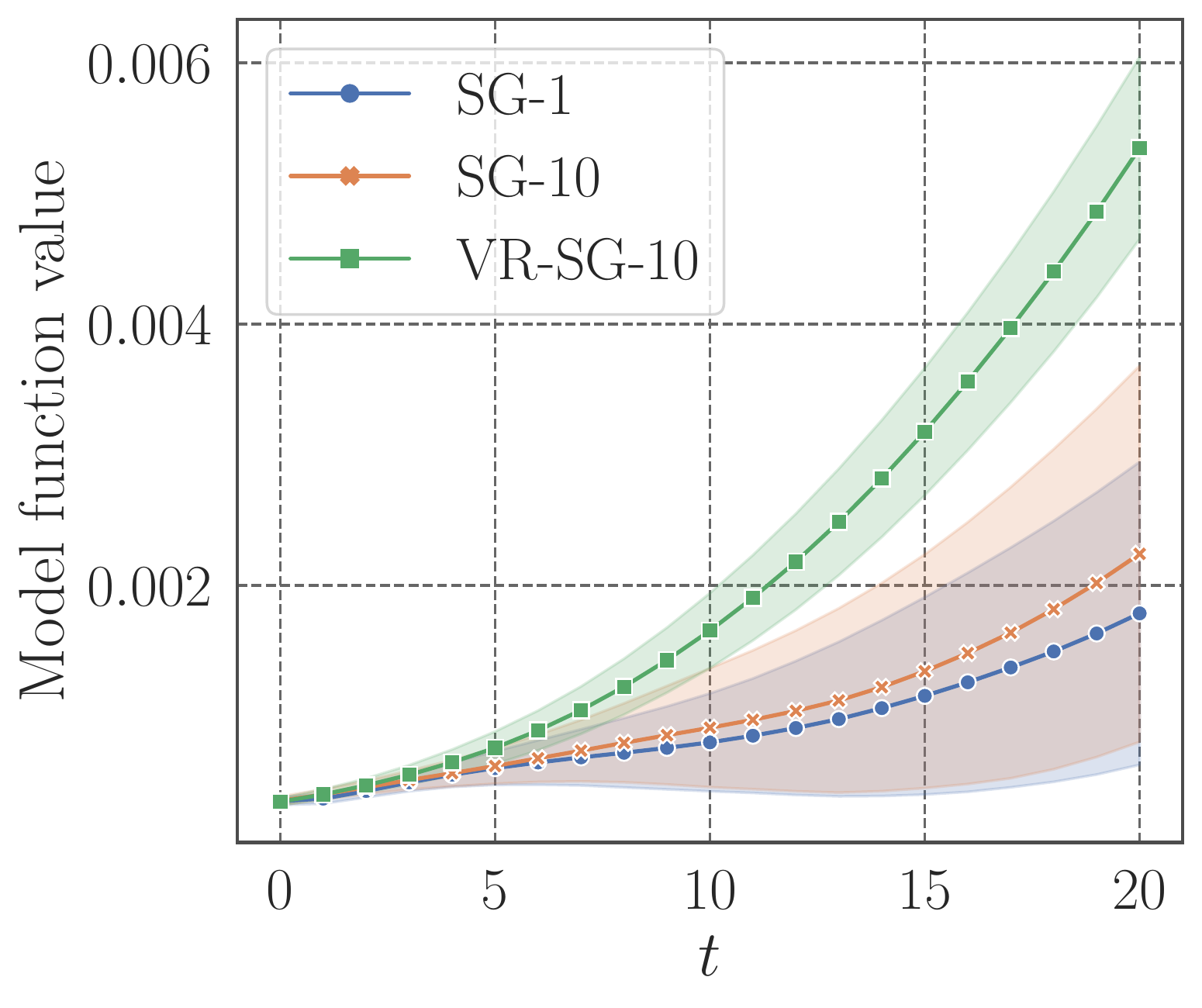}
		\subcaption{Noisy}
		\label{fig:noisy_model}
	\end{minipage}
	\begin{minipage}[t]{.4\textwidth}
		\includegraphics[width=1.0\textwidth]{./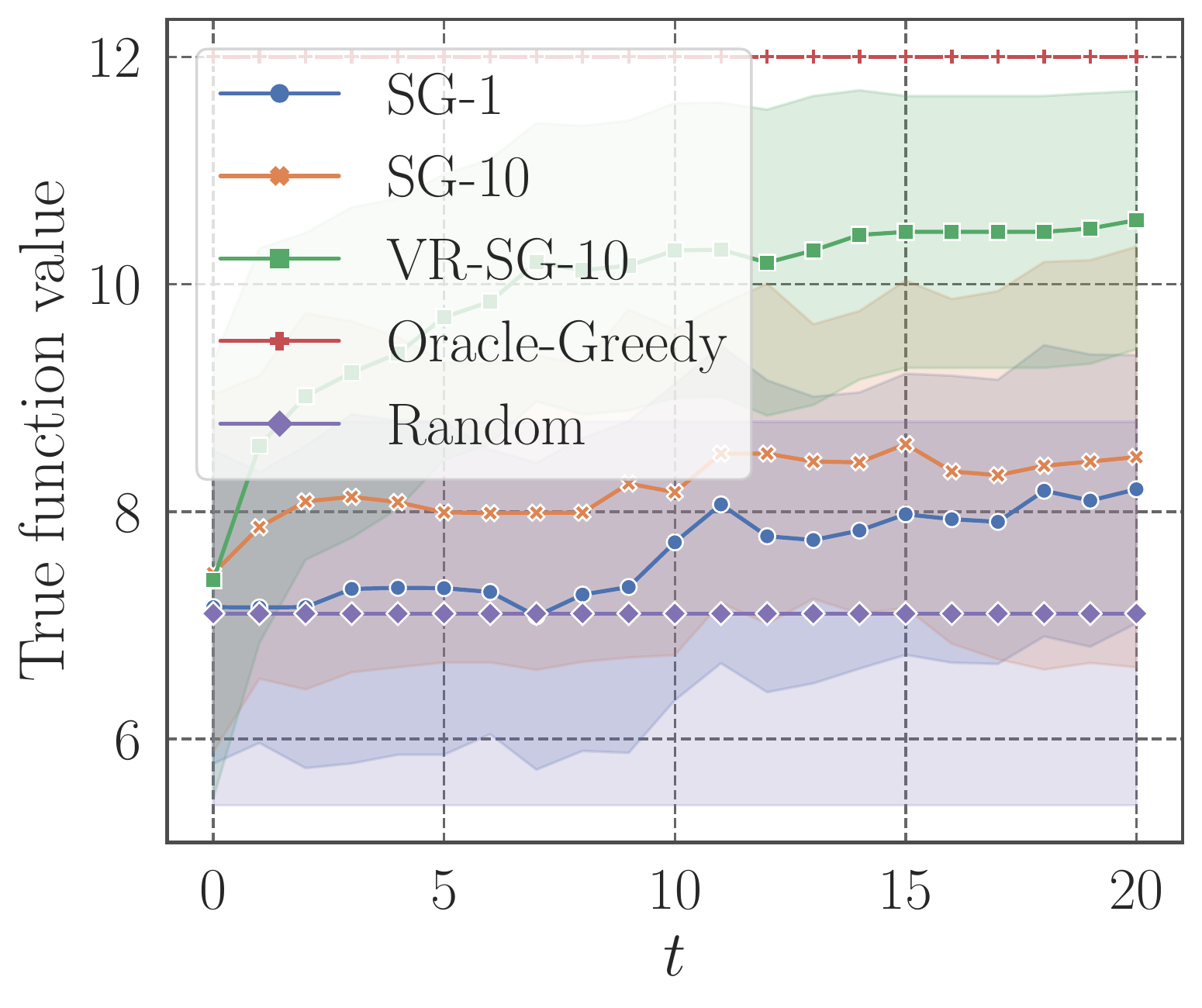}
		\subcaption{Noisy}
		\label{fig:noisy_true}
	\end{minipage}
	\caption{
		Mode and true function values  for noise-free (upper) and noisy (lower) settings.		 
	}
	\label{fig:dsf}
\end{figure}

We consider a situation where we make contact with business leaders to make influences on their companies. 
We use the corporate leadership network dataset of KONECT \citep{barnes2010structual, kunegis2013konect}, 
which contains person--company leadership information between $20$ people and $24$ companies; 
the companies are indexed with $i=1,\dots,24$. 
We let each $v\in V$ represent a person, who is associated with a subset of companies $I_v\subseteq \{1,\dots,24\}$.  
We define $I_X\coloneqq \bigcup_{v\in X} I_v$ for every $X\subseteq V$.  
We express the importance of the $i$-th company with a non-negative weight $w_i$; 
we let $w_1=w_3=\dots=w_{23}=1$ and $w_2=w_4=\cdots=w_{24}=0.1$.  
We use the weighted coverage function as an unknown true function: 
$\hat \f(X)\coloneqq \sum_{i \in I_X} w_i$.   
We separate the $20$ people into two groups of $10$ people, 
and we choose up to two people from 
each group; i.e., $(V, \Ical)$ forms a partition matroid.   

As a model function, $\f(\cdot, \thetavec)$, we use a deep submodular function that forms a $2$-layer NN. 
We set the hidden-layer size at $50$ and use sigmoid activation functions. 
We set initial NN parameters $\thetavec$ at non-negative values drawn uniformly at random from $[0, 0.01]$. 

For $t=1,\dots,20$, 
we perform {\sgreedy} $N$ times with objective function $\f(\cdot,\thetavec)$; 
we thus obtain $S_1,\dots, S_N\in\Ical$. 
We then query $\hat \f(S_1),\dots, \hat \f(S_N)$ values, 
with which we compute the gradient estimator, 
and we update $\thetavec$ by using Adam with learning rate $10^{-3}$.  
In each $t$-th round, we evaluate the quality of the trained model, $\f(\cdot,\thetavec)$, as follows: 
we obtain $S\in\Ical$ by applying the greedy algorithm to $\f(\cdot,\thetavec)$, 
and compute model function value $\f(S,\thetavec)$ 
and true function value $\hat\f(S)$. 
We also consider a noisy setting where observed $\hat \f(S_j)$ values are perturbed with random variables drawn from the standard normal distribution. 
We here use the entropy function with $\epsilon=0.02$ as a regularization function of {\sgreedy}.

As in \Cref{subsec:exp_decision}, 
({\bf VR-}){\bf SG-$N$} stands for (variance-reduced) {\sgreedy} with $N$ samples.   
We compare the true function values of ({\bf VR-}){\bf SG-$N$} with those of two methods: {\bf Oracle-Greedy}  and {\bf Random}.  
{\bf Oracle-Greedy} 
is the greedy algorithm directly applied to $\hat \f(\cdot)$, which we assume to be unknown in this setting; 
we use {\bf Oracle-Greedy} to see what if we had full access to the unknown true $\hat \f(\cdot)$. 
{\bf Random} returns $X\in\Ical$ by randomly choosing two people from each of the two groups. 

\Cref{fig:dsf} presents the means and standard deviations of the model and true function values over $30$ runs. 
We see that, by updating the model function, we can increase the true function values.  
As indicated by the results of {\bf SG-$1$}, 
if even once we can query $\hat\f(\cdot)$ value in each round, 
we can do better than {\bf Random}. 
With more queries and the variance reduction method, we can achieve higher true function values. 
The results suggest that our method is useful for learning and maximizing submodular functions when very limited prior knowledge and feedback are available.

\newcommand{\ssgreedy}{{\sc Stochastic Smoothed Greedy}}
\newcommand{\epsssg}{\varepsilon}
\newcommand{\nksg}{{n_k}}

\section{Differentiable stochastic greedy algorithm}\label{sec:stochastic_greedy}
We show that our framework can be used for making the stochastic greedy algorithm \citep{mirzasoleiman2015lazier} differentiable,  
which is a faster randomized variant of the greedy algorithm.  
In this section, we focus on the cardinality constrained case.

\Cref{alg:ssgreedy} presents the smoothed version of the stochastic greedy algorithm, which we call {\ssgreedy}.  
The only difference from {\sgreedy} (\Cref{alg:pgreedy}) 
is in Step \ref{stoc_step:uk}, 
where we sample $\nksg$ elements uniformly at random without replacement 
from $V\backslash S$ . 
In what follows, we let $\nk = \ceil{\frac{n}{K} \ln \frac{1}{\epsssg} }$ 
for every $k=1,\dots, K$, 
where $\epsssg \in (0, 1)$ is a hyper-parameter;  
this plays a role of controlling the speed--accuracy trade-off.

As with the original stochastic greedy algorithm, 
{\ssgreedy} requires only $\Orm(n \ln {1}/{\epsssg})$ evaluations 
of $\f(\cdot, \thetavec)$, 
while {\sgreedy} requires $\Orm(nK)$. 
Moreover,  
as explained in \Cref{subsec:stoc_gradient},
the gradient estimator for {\ssgreedy} 
can be computed more efficiently than that for {\sgreedy}. 
Therefore, {\ssgreedy} 
is useful when $n$ and $K$ are large and/or 
the evaluation of $\f(\cdot, \thetavec)$ is costly. 
Below we prove the approximation guarantee of {\ssgreedy}, 
and explain how to compute gradient estimators. 
We also present experiments to see the empirical speed--accuracy trade-off. 

\begin{algorithm}[tb]
	\caption{\ssgreedy}
	\label{alg:ssgreedy}
	\begin{algorithmic}[1]
		\State $S \gets \emptyset$
		\For{$k = 1,2\dots, K$}
		\State $\Uk = \{u_1,\dots,u_{\nk}\} \gets \text{ $\nksg$ elements chosen from $V\backslash S$ uniformly at random}$ 
		\label{stoc_step:uk}
		\State $\gveck{k}(\thetavec) = 
		(g_k(u_1, \thetavec),\dots, g_k(u_\nk, \thetavec))
		\gets (\fdelp{u_1}{S}{\thetavec},\dots,\fdelp{u_{\nk}}{S}{\thetavec})$
		\State $\pveck{k}(\thetavec) = 
		(\pk{k}(u_1, \thetavec),\dots, \pk{k}(u_\nk, \thetavec))
		\gets \argmax_{\pvec\in\Delta^{\nk}}\{\iprod{\gveck{k}(\thetavec), \pvec} - \Omega_k(\pvec)   \}$
		\State $s_k \gets u\in \Uk$ with probability $\pk{k}(u, \thetavec)$
		\State $S\gets S\cup\{s_k\}$ 
%		\If{$S$ is maximal} \Return $S$
%		\EndIf
		\EndFor
		\Return $S$        
	\end{algorithmic}
\end{algorithm}

\subsection{Approximation guarantee}\label{subsec:stoc_approximation_guarantee}

We prove that \Cref{alg:ssgreedy} returns solution $S$ 
that satisfies the following approximation guarantee for the cardinality constrained case. 

\begin{restatable}{thm}{cardinalitysg}\label{thm:ssgreedy} 
	If $\nksg \ge \frac{n}{K} \ln \frac{1}{\epsssg}$ for $k=1,\dots,K$, 	
	 we have 
	$\E[\f(S, \thetavec)] \ge (1-1/\e-\epsssg) \f(O,\thetavec) - \delta K$. 
\end{restatable}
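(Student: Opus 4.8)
The plan is to reuse the proof of \Cref{thm:cardinality} almost verbatim, with its per-step marginal-gain estimate replaced by one that also absorbs the loss from subsampling $\Uk$ in Step~\ref{stoc_step:uk}. As in \Cref{asec:proofs} I would write $S_k$ for the solution after the $k$-th iteration, with $S_0=\emptyset$; since \Cref{alg:ssgreedy} always performs $K$ iterations, $S=S_K$. The only new ingredient over \Cref{thm:cardinality} is an averaged version of \Cref{lem:fdel}: its proof uses nothing about $\Uk$ beyond the facts that $s_k$ is drawn from $\Uk$ with probabilities $\pvec_k=\argmax_{\pvec\in\Delta^{\nksg}}\{\iprod{\gvec_k,\pvec}-\Omega_k(\pvec)\}$ and that $\ib_u\in\Delta^{\nksg}$ with $\iprod{\gvec_k,\ib_u}=\fdelp{u}{S_{k-1}}{\thetavec}$ for each $u\in\Uk$, so it goes through unchanged even when $\Uk$ is random. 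Hence, conditioning on $S_{k-1}$ \emph{and} on $\Uk$, I get $\E[\fdelp{s_k}{S_{k-1}}{\thetavec}]\ge\max_{u\in\Uk}\fdelp{u}{S_{k-1}}{\thetavec}-\delta$; averaging over $\Uk$ (still conditioning on $S_{k-1}$), it remains to bound $\E_{\Uk}[\max_{u\in\Uk}\fdelp{u}{S_{k-1}}{\thetavec}]$ from below.

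The heart of the argument is the subsampling bound of \cite{mirzasoleiman2015lazier}: for fixed $S_{k-1}$ with $|S_{k-1}|=k-1\le K-1$,
\[
\E_{\Uk}\!\left[\max_{u\in\Uk}\fdelp{u}{S_{k-1}}{\thetavec}\right]\ \ge\ \frac{1-\epsssg}{K}\bigl(\f(O,\thetavec)-\f(S_{k-1},\thetavec)\bigr).
\]
Write $A\coloneqq O\backslash S_{k-1}$, $a\coloneqq|A|\le K$, and $n'\coloneqq|V\backslash S_{k-1}|\le n$. If $A=\emptyset$ the right-hand side is $\le0$ by monotonicity, so the bound is trivial. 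If $n'\le\nksg$ then $\Uk=V\backslash S_{k-1}\supseteq A$, and $\max_{u\in\Uk}\fdelp{u}{S_{k-1}}{\thetavec}\ge\tfrac1a\sum_{v\in A}\fdelp{v}{S_{k-1}}{\thetavec}\ge\tfrac1K\fdelp{A}{S_{k-1}}{\thetavec}\ge\tfrac1K(\f(O,\thetavec)-\f(S_{k-1},\thetavec))$ using submodularity, $a\le K$, and monotonicity, which is even stronger than needed.

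In the remaining case $\Uk$ is a uniformly random $\nksg$-subset of the $n'$-element set $V\backslash S_{k-1}$. By exchangeability, conditioned on $\Uk\cap A\neq\emptyset$ a uniformly random element of $\Uk\cap A$ is distributed uniformly on $A$, and all marginals are nonnegative by monotonicity, so dropping the contribution of $\{\Uk\cap A=\emptyset\}$ is legitimate and
\[
\E_{\Uk}\!\left[\max_{u\in\Uk}\fdelp{u}{S_{k-1}}{\thetavec}\right]\ \ge\ \Pr[\Uk\cap A\neq\emptyset]\cdot\frac1a\sum_{v\in A}\fdelp{v}{S_{k-1}}{\thetavec}\ \ge\ \Pr[\Uk\cap A\neq\emptyset]\cdot\frac1a\fdelp{A}{S_{k-1}}{\thetavec},
\]
the last step by submodularity. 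A direct count gives $\Pr[\Uk\cap A=\emptyset]=\binom{n'-a}{\nksg}/\binom{n'}{\nksg}\le(1-\nksg/n')^{a}\le e^{-a\nksg/n'}\le e^{-a\nksg/n}\le\epsssg^{a/K}$, using $n'\le n$ and $\nksg\ge\tfrac nK\ln\tfrac1\epsssg$; then Bernoulli's inequality $\epsssg^{a/K}=(1-(1-\epsssg))^{a/K}\le1-\tfrac aK(1-\epsssg)$ yields $\Pr[\Uk\cap A\neq\emptyset]\ge\tfrac aK(1-\epsssg)$. Multiplying and using $\fdelp{A}{S_{k-1}}{\thetavec}\ge\f(O,\thetavec)-\f(S_{k-1},\thetavec)$ (monotonicity) proves the bound.

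Combining the upgraded \Cref{lem:fdel} with the subsampling bound and taking expectations over $S_{k-1}$ gives, for $k=1,\dots,K$,
\[
\E[\f(S_k,\thetavec)]-\E[\f(S_{k-1},\thetavec)]\ \ge\ \frac{1-\epsssg}{K}\bigl(\f(O,\thetavec)-\E[\f(S_{k-1},\thetavec)]\bigr)-\delta,
\]
which is exactly the recursion in the proof of \Cref{thm:cardinality} with $1/K$ replaced by $(1-\epsssg)/K$. The same induction yields $\E[\f(S_K,\thetavec)]\ge\bigl(1-(1-\tfrac{1-\epsssg}{K})^{K}\bigr)\f(O,\thetavec)-\delta\sum_{k=0}^{K-1}(1-\tfrac{1-\epsssg}{K})^{k}$, and the elementary estimates $(1-\tfrac{1-\epsssg}{K})^{K}\le e^{-(1-\epsssg)}\le e^{-1}+\epsssg$ (the last step from $e^{\epsssg}-1\le e\,\epsssg$ for $\epsssg\in(0,1)$) and $\sum_{k=0}^{K-1}(1-\tfrac{1-\epsssg}{K})^{k}\le K$ turn this into $\E[\f(S,\thetavec)]=\E[\f(S_K,\thetavec)]\ge(1-1/\e-\epsssg)\f(O,\thetavec)-\delta K$. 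The main obstacle is the subsampling bound: the delicate point is to come out with the factor $(1-\epsssg)/K$ rather than a much weaker $1/(Ka)$ or $1/K^2$, which forces one to keep the hit probability $\Pr[\Uk\cap A\neq\emptyset]$ separate from the conditional expected gain and let the spurious factor $a$ cancel against $1-\epsssg^{a/K}\ge\tfrac aK(1-\epsssg)$; the corner cases $A=\emptyset$ and $n'\le\nksg$, as well as the need to invoke \Cref{lem:fdel} conditioned on the random $\Uk$, also require a little care.
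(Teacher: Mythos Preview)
Your proof is correct and follows essentially the same route as the paper: condition on $S_{k-1}$ and on the random $\Uk$, apply \Cref{lem:fdel} to compare $\E[\fdelp{s_k}{S_{k-1}}{\thetavec}]$ with $\max_{u\in\Uk}\fdelp{u}{S_{k-1}}{\thetavec}$, then average over $\Uk$ and invoke the stochastic-greedy subsampling bound to obtain the recursion with $(1-\epsssg)/K$ in place of $1/K$, and finish by the same induction as in \Cref{thm:cardinality}. The only substantive difference is that the paper simply cites \cite{mirzasoleiman2015lazier} for the inequality $\E_{\Uk}[\max_{u\in\Uk}\fdelp{u}{S_{k-1}}{\thetavec}]\ge\tfrac{1-\epsssg}{K}(\f(O\cup S_{k-1},\thetavec)-\f(S_{k-1},\thetavec))$, whereas you reprove it in full (with the harmless variant $\f(O,\thetavec)$ in place of $\f(O\cup S_{k-1},\thetavec)$, applying monotonicity one step earlier); your self-contained derivation of the hit-probability bound and the cancellation of $a$ via Bernoulli is fine and makes the argument standalone, but it is not a different strategy.
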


\begin{proof}
	As with the proofs in \Cref{asec:proofs},
	we omit $\thetavec$ and use $S_k$ to denote the solution 
	obtained in the $k$-th step ($k=0,\dots, K$). 
	We take all random quantities to be conditioned on the realization of 
	the $(k-1)$-th step. 
	Once $\Uk$ is fixed in Step \ref{stoc_step:uk}, 
	we can obtain the following inequality from \Cref{lem:fdel}: 
	\begin{align}
	\E[\fdel{s_k}{S_{k-1}} \relmid{} \Uk] 
	\ge \fdel{s^*_k}{S_{k-1}} - \delta, 
	\end{align}
	where 
	$s^*_k \in \argmax_{u \in \Uk} \fdel{u}{S_{k-1}}$ 
	and
	$\E[\cdot\relmid{} \Uk]$ 
	denotes the expectation conditioned on $\Uk$. 
	By taking the expectation over all possible choices of $\Uk$, we obtain 
	\begin{align}\label{eq:fdelsg}
	\E[\f(S_k)] - \f(S_{k-1})
	=
	\E[\fdel{s_k}{S_{k-1}}] \ge \E[\fdel{s^*_k}{S_{k-1}}] - \delta.  
	\end{align}
	Note that here, $s^*_k$ is a random variable representing an element, 
	which the original stochastic greedy algorithm adds to the current solution. 	
	As proved in \citep{mirzasoleiman2015lazier}, 
	if $\nksg \ge \frac{n}{K} \ln \frac{1}{\epsssg}$, we have 
	\begin{align}\label{eq:s_marginal}
	\E[\fdel{s^*_k}{S_{k-1}}] \ge \frac{1-\epsssg}{K} ( \f(O \cup S_{k-1}) -\f(S_{k-1}) ).
	\end{align}	
	By substituting this inequality into \eqref{eq:fdelsg} and taking the expectation over all 
	possible realizations of $S_{k-1}$, we obtain 
	\begin{align}\label{eq:ss_marginal}
	\E[\f(S_k)] - \E[\f(S_{k-1})] \ge \frac{1-\epsssg}{K} ( \E[\f(O \cup S_{k-1})] -\E[\f(S_{k-1})] ) - \delta,    
	\end{align}	
	which holds for $k=1,\dots, K$. 
	Therefore, by induction, we obtain the theorem as follows: 
	\[ 
	\E[\f(S)] \ge \left( 1 - \left(1 - \frac{1-\epsssg}{K}\right)^K \right) \f(O) 
	- \delta\sum_{k=0}^{K-1}\left(1 - \frac{1-\epsssg}{K}\right)^k
	\ge \left(1 - \frac{1}{\e} - \epsssg \right) \f(O) - \delta K,  
	\]
	where we used 
	$\E[\f(O \cup S_{k-1})] \ge \f(O)$, 
	$\f(\emptyset)=0$,  
	and $\E[\f(S)] = \E[\f(S_K)]$. 	
\end{proof}

%\memo{to do; mirz lem for non-monotone}
%To deal with the non-monotonicity, 
%we perform Step \ref{stoc_step:uk} as follows: 
%we sample $\nksg$ elements from $V\backslash S$, 
%compute marginal gain $\fdel{u}{S}$ for each sampled element $u$,  
%and add $u$ to $\Uk$ if the $\fdel{u}{S} > 0$. 
%...this step seems to be non-trivial
%
%\begin{proof}
%	As in \citep[Lemma 1]{sakaue2019guarantees}, inequality \eqref{eq:s_marginal} holds even in the non-monotone case. 
%	
%	\begin{align}
%	\E[\f(S_{k})] \ge
%	\\
%	\E[\f(S_{k-1})] + \frac{1-\epsssg}{K} ( \f(O \cup S_{k-1}) -\f(S_{k-1}) ) - \delta
%	\\
%	\left( 1 - \frac{1-\epsssg}{K}\right) \E[\f(S_{k-1})] 
%	+ 
%	\frac{1-\epsssg}{K} \left( 1 - \frac{1}{K} \ln \frac{1}{\epsssg} - \frac{2}{n-K} \right)^{k-1} 
%	\f(O) - \delta
%	\end{align}
%\end{proof}

\subsection{Gradient estimation}\label{subsec:stoc_gradient}

We show how to compute gradient estimators for {\ssgreedy}. 
As with the case of {\sgreedy}, 
outputs of {\ssgreedy} are distributed over $\Sscr_{\le K}$. 
Therefore, the score-function gradient estimator can be computed by sampling outputs as in \Cref{sec:gradient}, i.e.,  
\begin{align}
\nabla_\thetavec \E_{S\sim p(\thetavec)}[\Q(S)] 
\approx
\frac{1}{N} \sum_{j = 1}^{N} \Q(S_j) \nabla_\thetavec \ln p(S_j,\thetavec)  
\quad \text{where} \quad 
S_j = (s_{1}, \dots, s_{|S_j|}) \sim p(\thetavec). 
\end{align}
Note that here $p(\thetavec)$ denotes the output distribution of 
the {\ssgreedy}; 
more precisely, 
for $\pvec_1(\thetavec), \dots, \pvec_K(\thetavec)$ 
and solution $S = \{s_1,\dots, s_K\}$ computed by \Cref{alg:ssgreedy}, 
we let 
$p(S, \thetavec) = \prod_{k=1}^{|S|} \pk{k}(s_k, \thetavec)$, 
where $\pk{k}(s_k, \thetavec)$ is the entry of $\pvec_k(\thetavec)$
corresponding to $s_k\in \Uk$. 
We can compute $\nabla_\thetavec \ln p(S_j,\thetavec)$ 
in the same manner as in \Cref{sec:gradient}.
%, and thus
%we can obtain unbiased gradient estimators for {\ssgreedy}. 

%We then see how the computation cost is reduced. 
Remember that the computation of $\nabla_\thetavec \ln p(S_j,\thetavec)$ 
involves the following differentiation based on the chain rule: 
$\nabla_\thetavec \pveck{k}(\thetavec)
=
\nabla_{\gvec_k} \pveck{k}(\gvec_k) \cdot \nabla_\thetavec \gvec_k(\thetavec)$. 
Here, the dimensionality of $\pvec_k$ and $\gvec_k$ is at most 
$\nk = \ceil{\frac{n}{K} \ln \frac{1}{\epsssg} }$, 
while it is up to $n$ in the case of {\sgreedy}. 
Therefore, {\ssgreedy} is effective for speeding up 
the computation of gradient estimators. 

\begin{figure}[tb]
	\centering
	\begin{minipage}[t]{.3\textwidth}
		\includegraphics[width=1.0\textwidth]{./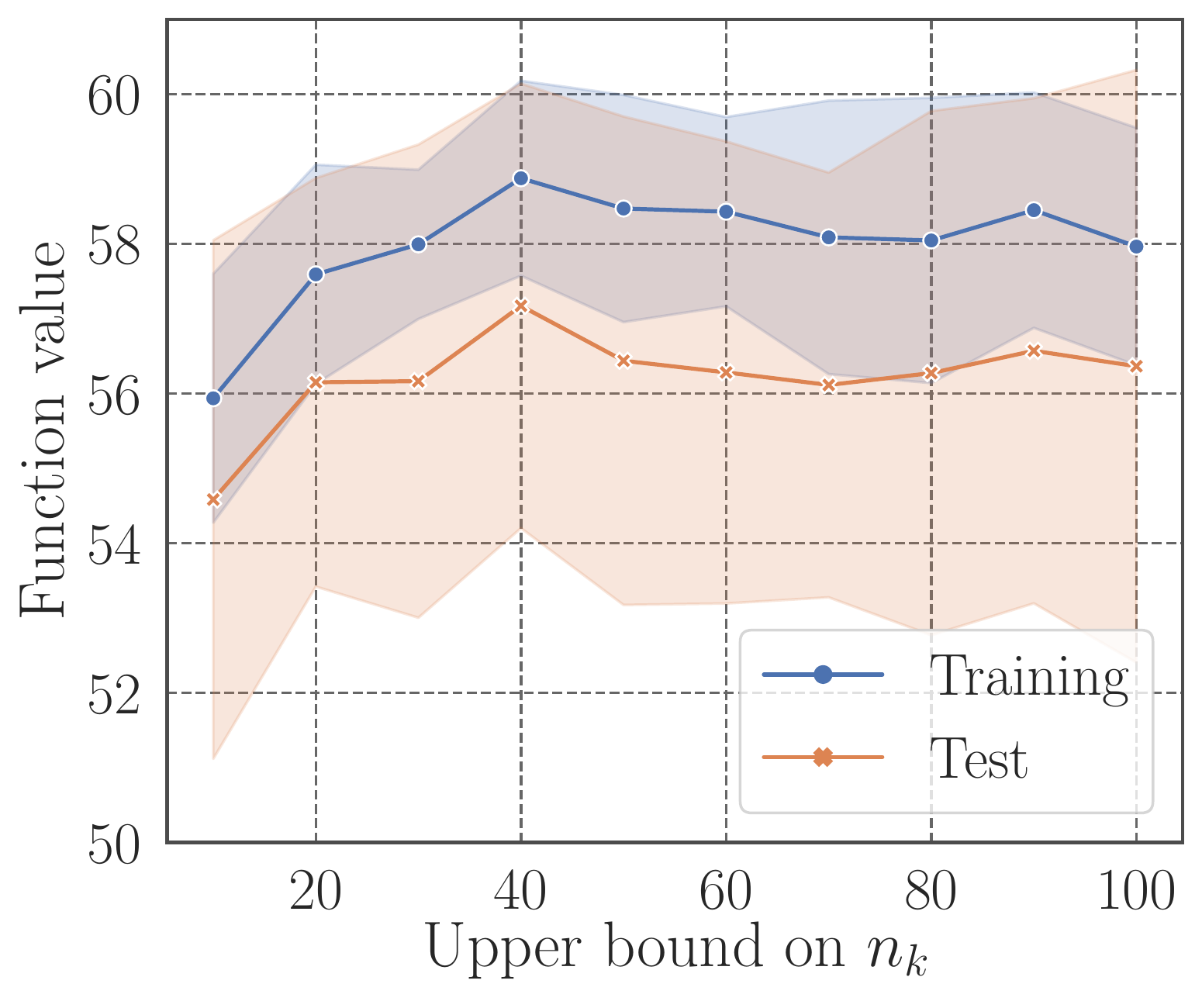}
		\subcaption{Training and test scores}
		\label{fig:stoc_fvals}
	\end{minipage}
	\begin{minipage}[t]{.3\textwidth}
		\includegraphics[width=1.0\textwidth]{./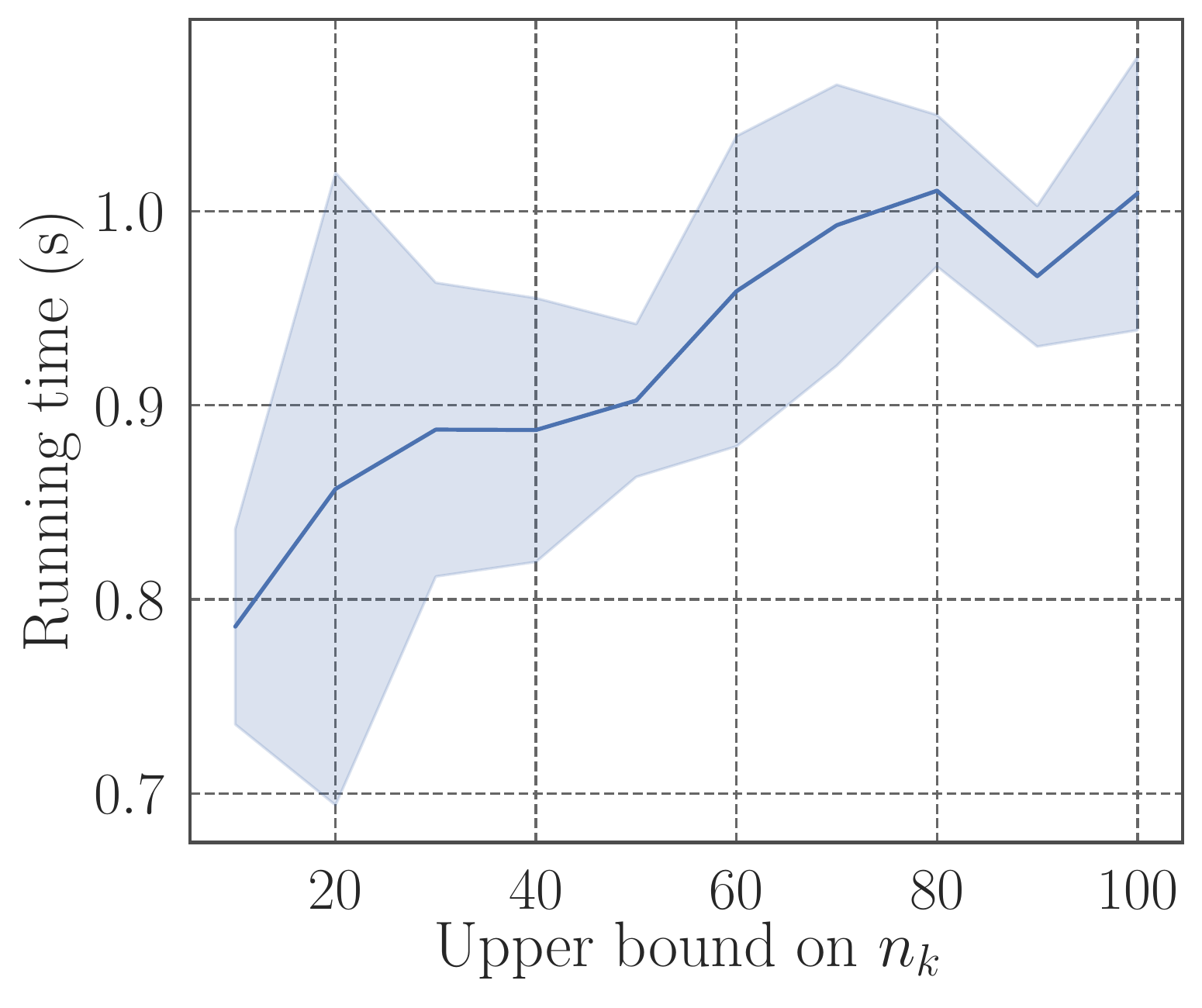}
		\subcaption{Solution computation}
		\label{fig:stoc_sgrd}
	\end{minipage}
	\begin{minipage}[t]{.3\textwidth}
		\includegraphics[width=1.0\textwidth]{./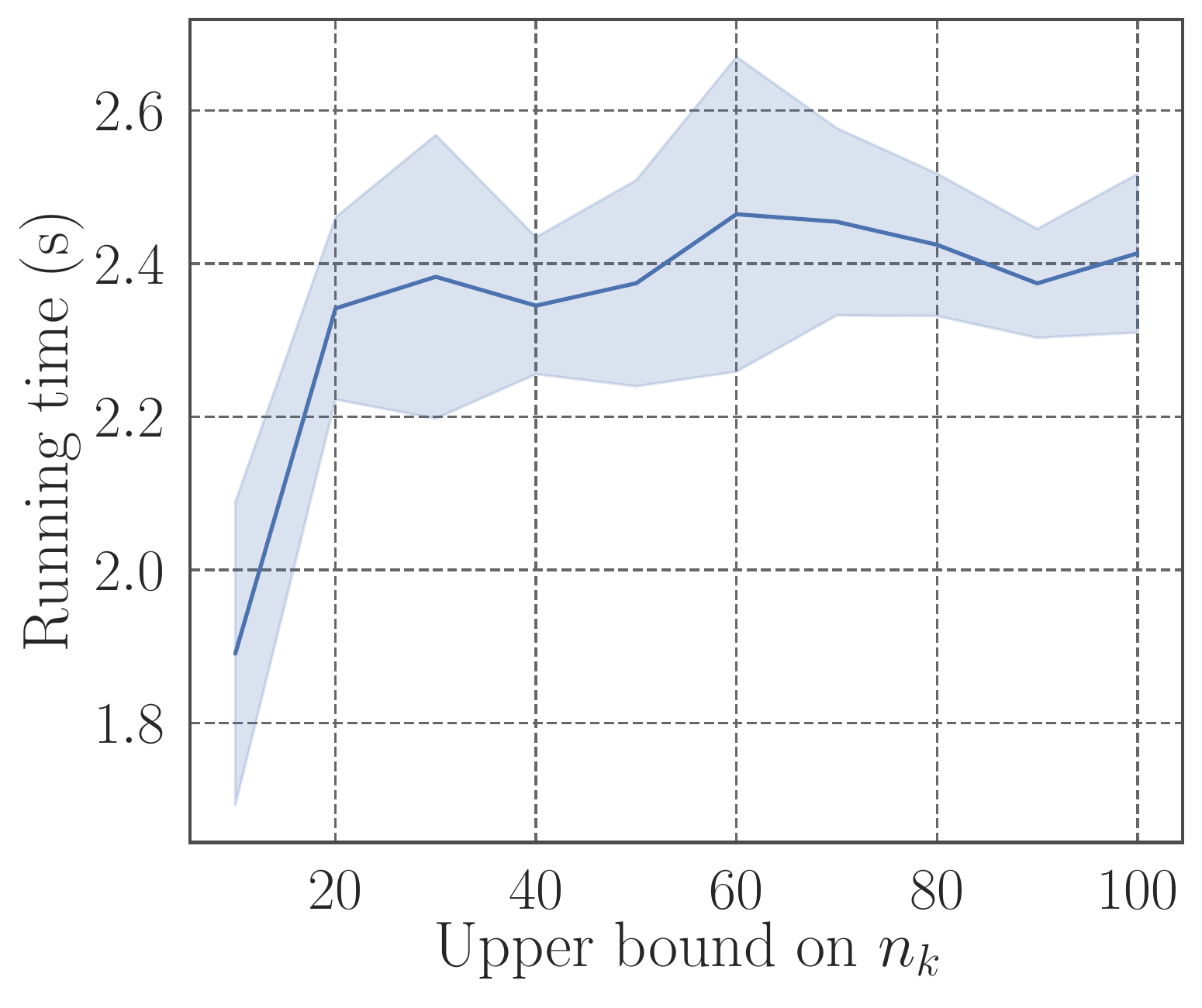}
		\subcaption{Gradient estimation}
		\label{fig:stoc_grad}
	\end{minipage}
	\caption{
		(a): 
		Objective function values achieved for the training and test instances.  
		(b): 
		Running times of {\ssgreedy} for updating $\wvec$ once (summation over $200$ runs).
		(c): 
		Running times of the gradient-estimator computation.
	}
	\label{fig:stoc}
\end{figure}

\subsection{Experiments}
We study the empirical performance of {\ssgreedy}. 
We use the same settings as those of the decision-focused learning experiments with $K=10$ (see, \Cref{subsec:exp_decision}), 
where we have $n=100$. 
We apply the {\ssgreedy} version of {\bf VR-SG-$10$} to the instances. 
We consider various upper-bound values, 
$10, 20, \dots, 100$, 
on $\nksg$; that is, 
in Step~\ref{stoc_step:uk} of \Cref{alg:ssgreedy}, 
we set $\nksg$ at the upper-bound value if it is less than $|V\backslash S|$ and at $|V\backslash S|$ otherwise. 

We evaluate objective function values with training and test instances for each upper bound on $\nksg$, 
where 
we calculate the means and standard deviations over $30$ training/test splits as in \Cref{subsec:exp_decision}. 
We also observe running times required for computing solutions with {\ssgreedy} 
and estimating gradients. 
More precisely,    
we measure those times taken for once updating the predictive-model parameter, $\wvec$; 
since the mini-batch size is $20$ and we perform $N=10$ trials, 
we take the sum of times over $200$ runs as the running time of {\ssgreedy}. 
In this experiment, 
$\wvec$ is updated $600$ times in total; we have $80 / 20 = 4$ mini-batches for each of $5$ epochs, and we consider $30$ random training/test splits, hence $4 \times 5 \times 30 = 600$. The running times of {\ssgreedy} and gradient estimation will be indicated with means and standard deviations over the $600$ iterations.

%\Cref{fig:stoc} presents the results.  
As shown in \Cref{fig:stoc_fvals}, 
even if $\nksg$ decreases, the objective function values do not drop so much with both training and test instances; rather, the highest values are achieved with $\nk=40$.  
The results imply that the stochastic greedy algorithm remains empirically effective even if it is smoothed with our framework. 
\Cref{fig:stoc_sgrd,fig:stoc_grad} 
confirm that by decreasing $\nksg$, we can reduce the running times required for 
computing solutions and estimating gradients. 
In this experimental setting, since the instance size is not so large and objective function values can be efficiently computed via matrix-vector products, 
the run-time overhead becomes dominant; 
this makes the degree of the speed-up yielded by decreasing $\nksg$ appears less significant.  
However, when instance sizes are larger and evaluations of objective functions are more costly, the speed-up achieved by using {\ssgreedy}  becomes more significant.

\end{document}